\def\myendproof{{\hfill \vbox{\hrule\hbox{%
   \vrule height1.3ex\hskip0.8ex\vrule}\hrule }}\par}
\newtheorem{theorem}{Theorem}
\newtheorem{lemma}[theorem]{Lemma}
\newenvironment{proof}{{\vspace{-5pt}\it Proof. }}{\myendproof}
\newcommand{\setof}[1]{\{{#1}\}}
\newcommand{\Xomit}[1]{}
\newcommand{\rgb}[1]{#1}
\title{{\bf A Linear-Time Algorithm for the \\ Weighted Paired-Domination Problem on Block Graphs\thanks{This work is partially supported by the National Science Council under the Grants No. NSC-102-2221-E-019-038-, and NSC-103-2221-E-019-034-.}}}
\author{Ching-Chi Lin\thanks{Department of Computer Science and Engineering,
                             National Taiwan Ocean University,
                             Keelung 20224, Taiwan. Corresponding author.
                             Email: lincc@mail.ntou.edu.tw}
%        \and
%        Gen-Huey Chen\thanks{Department of Computer Science and
%                          Information Engineering,
%                          National Taiwan University,
%                          Taipei 10617, Taiwan.
%                          Email: ghchen@csie.ntu.edu.tw; r01922114@ntu.edu.tw}
        \and
        Cheng-Yu Hsieh\thanks{Department of Computer Science and
                              Information Engineering,
                              National Taiwan University,
                              Taipei 10617, Taiwan.
                              Email: r01922114@ntu.edu.tw}
}
\date{\today}
\begin{document}
\maketitle

\begin{abstract}
%Given a graph $G = (V,E)$, the \emph{domination problem} is to find a minimum size vertex subset $S\subseteq V(G)$, called a \emph{dominating set}, such that every vertex not in $S$ is adjacent to a vertex in $S$. 
In a graph $G = (V,E)$, a vertex subset $S\subseteq V(G)$ is
said to be a dominating set of $G$ if every vertex not in $S$ is
adjacent to a vertex in $S$. A dominating set $S$ of $G$ is called a paired-dominating set of $G$ if the induced subgraph $G[S]$ contains a perfect matching. 
%The \emph{paired-domination problem} involves finding a
%paired-dominating set $S$ of $G$ such that the cardinality of $S$
%is minimized. Suppose that, for each $v \in V(G)$, we have a
%weight $w(v)$ specifying the cost for adding $v$ to $S$. The
%\emph{weighted paired-domination problem} is to find a paired-dominating
%set $S$ whose weight \rgb{$w(S) = \sum_{v \in
%S} {w(v)}$} is minimized. 
In this paper, we propose an $O(n+m)$-time
algorithm for the weighted paired-domination problem on block
graphs using dynamic programming, which strengthens the results in [Theoret. Comput. Sci., 410(47--49):5063--5071, 2009] and [J. Comb. Optim., 19(4):457--470, 2010]. Moreover, the algorithm can be completed in $O(n)$ time if the block-cut-vertex structure of $G$ is given.

%, which strengthens the results in [Theoret. Comput. Sci., 410(47--49):5063--5071, 2009] and [J. Comb. Optim., 19(4):457--470, 2010]

\bigskip

\noindent \textbf{Keywords:} \rgb{Weighted} paired-domination problem, perfect
matching, block graph, dynamic programming\rgb{.}

\end{abstract}

\newpage

\def\skippt{22.2pt}
\baselineskip \skippt

%%%%%%%%%%%%%%%%%%%%%%%%%%%%%%%%%%%%%%%%%% %%%%%%%%%%%%%%%%%%%%%
\section{Introduction}\label{section:intro}
%%%%%%%%%%%%%%%%%%%%%%%%%%%%%%%%%%%%%%%%%%%%%%%%%%%%%%%%%%%%%%%%T
In a graph $G = (V,E)$, a vertex subset $S\subseteq V(G)$ is
said to be a \emph{dominating set} of $G$ if every vertex not in $S$ is
adjacent to a vertex in $S$. Let $G[S]$ denote the subgraph of $G$ induced by a subset $S$ of $V(G)$. A dominating set $S$ of $G$ is called
a \emph{paired-dominating set} if the induced subgraph $G[S]$ contains a
perfect matching. The \emph{paired-domination problem} involves finding a
paired-dominating set $S$ of $G$ such that the cardinality of $S$
is minimized. Suppose that, for each $v \in V(G)$, we have a
weight $w(v)$ specifying the cost for adding $v$ to $S$. The
\emph{weighted paired-domination problem} is to find a paired-dominating set $S$ whose \rgb{$w(S) = \sum_{v \in
S} {w(v)}$} is minimized.

The domination problem has been extensively studied in the area of algorithmic graph theory for \rgb{several} decades; see~\cite{Hedetniemi90,Hedetniemi91,Haynes98,Haynes98-2,Goddard13,Henning09,Chang13} for books and survey papers. 
%Given a graph $G = (V,E)$, the \emph{domination problem} is to find a minimum size vertex subset $S\subseteq V(G)$, called a \emph{dominating set}, such that every vertex not in $S$ is adjacent to a vertex in $S$. 
It has many applications in the real world such as location problems, communication networks, and kernels of games~\cite{Haynes98}. Depending
on the requirements of different types of applications, there are
several variants of the domination problem, such as the
independent domination, connected domination, total domination, and perfect domination problems~\cite{Chang13,Goddard13,Henning09,Yen96}. These
problems have been proved to be NP-complete and have
polynomial-time algorithms on some special classes of graphs.
In particular, Haynes and Slater~\cite{HS98} introduced the
concept of paired-domination motivated by security concerns. In a
museum protection program, beside the requirement that each region
has a guard in it or is in the protection range of some guard, the
guards must be able \rgb{to} back each other up. 
%
%%In a graph $G = (V,E)$, a vertex subset $S\subseteq V(G)$ is
%%said to be a \emph{dominating set} of $G$ if every vertex not in $S$ is
%%adjacent to a vertex in $S$. Let $G[S]$ denote the subgraph of $G$ induced by a subset $S$ of $V(G)$. 
%
%A dominating set $S$ of $G$ is called
%a \emph{paired-dominating set} if the induced subgraph $G[S]$ contains a
%perfect matching. The \emph{paired-domination problem} involves finding a
%paired-dominating set $S$ of $G$ such that the cardinality of $S$
%is minimized. The \emph{paired-domination number} $\gamma_p(G)$ is the minimum cardinality of a paired-dominating set. Suppose that, for each $v \in V(G)$, we have a weight $w(v)$ specifying the cost for adding $v$ to $S$. The    \emph{weighted paired-domination problem} is to find a paired-dominating set $S$ whose weight $w(S) = \sum_{v \in S} {w(v)}$ is minimized. 

\rgb{The \emph{paired-domination number} $\gamma_p(G)$ is the minimum cardinality of a paired-dominating set. In~\cite{HS98}, Haynes and Slater showed that the problem of determining whether $\gamma_p(G) \le c$ is NP-complete on general graphs and gave a lower bound of $n/\Delta(G)$ for $\gamma_p(G)$, where $c$ is a positive even integer, $n$ is the number of the vertices in $G$, and $\Delta(G)$ is the maximum degree of $G$.} Recently, many studies have been made for this problem in proving NP-completeness, providing approximation algorithms, and finding polynomial-time algorithms on some special classes of graphs. Here, we only mention some  related results. For more detailed information regarding this problem, please refer to~\cite{Kang13}. Chen~{\em et al.}~\cite{Chen10} demonstrated that the paired-domination problem is also NP-complete on bipartite graphs, chordal graphs, and split graphs. In~\cite{Chen09}, Chen~{\em et al.} proposed an approximation algorithm with ratio $\ln(2\Delta(G))+1$ for general graphs and showed that the problem is APX-complete, i.e., has no PTAS. Panda and Pradhan~\cite{Panda12} strengthened the results in~\cite{Chen10} by showing that the problem is also NP-complete for perfect elimination bipartite graphs.
						
Meanwhile, polynomial-time algorithms have been studied intensively on some special classes of graphs such as tree graphs~\cite{QKCD03}, weighted tree graphs~\cite{Chen09}, inflated tree graphs~\cite{KSC04}, convex bipartite graph~\cite{Hung12,Panda13}, permutation graphs~\cite{CKS09,Lappas09,Lappas13}, strongly chordal graphs~\cite{Chen-Lu-Zeng-09}, interval graphs~\cite{Chen10} and circular-arc graphs~\cite{lin15}. Especially, Chen~{\em et al.}~\cite{Chen10} introduced an $O(m+n)$-time algorithm for block graphs, a proper superfamily of tree graphs. In this paper, we propose an $O(n+m)$-time algorithm for the weighted paired-domination problem on block graphs using dynamic programming, which strengthens the results in~\cite{Chen09,Chen10}. Moreover, the algorithm can be completed in $O(n)$ time if the block-cut-vertex structure of $G$ is given. Notice that the block-cut-vertex structure of a block graph $G$ can be constructed in $O(n+m)$ time by the depth first search algorithm~\cite{Aho74}. 
%
%The domination problem and its variants have studied intensively on block graphs~\cite{Yeh01,Chang89}. These problems have been proven to have linear-time algorithms on weighted block graphs~\cite{Yeh01,Yeh98,Yen96}. Hence, the results of this paper complete the role of paired-domination problem on weighted block graphs. 

The remainder of this paper is organized as follows. In Section~\ref{section:algo-dynamic}, given the block-cut-vertex structure of a block graph $G$, we employ dynamic programming to present an $O(n)$-time algorithm for finding a minimum-weight paired-dominating set of $G$. In Section~\ref{section:efficient-implementation}, the correctness proof and complexity analysis of the algorithm are provided. Section~\ref{section:conclusion} contains some concluding remarks and future work.

%%%%%%%%%%%%%%%%%%%%%%%%%%%%%%%%%%%%%%%%%% %%%%%%%%%%%%%%%%%%%%%
\section{The Proposed Algorithm for Block Graphs}
\label{section:algo-dynamic}
%%%%%%%%%%%%%%%%%%%%%%%%%%%%%%%%%%%%%%%%%%%%%%%%%%%%%%%%%%%%%%%%
In this section, given a weighted block graph $G$ with the block-cut-vertex structure $G^*$ of $G$, we propose an $O(n)$-time algorithm that determines a minimum-weight paired-dominating set of $G$ using dynamic programming. \rgb{Since a graph $G$ containing isolated vertices has no paired-dominating set, we suppose that $G$ is a connected graph without isolated vertices in the rest of this paper.} First, we introduce some preliminaries for block graphs. 

\vspace{12pt}
\begin{figure}[thb]
\centerline{\input{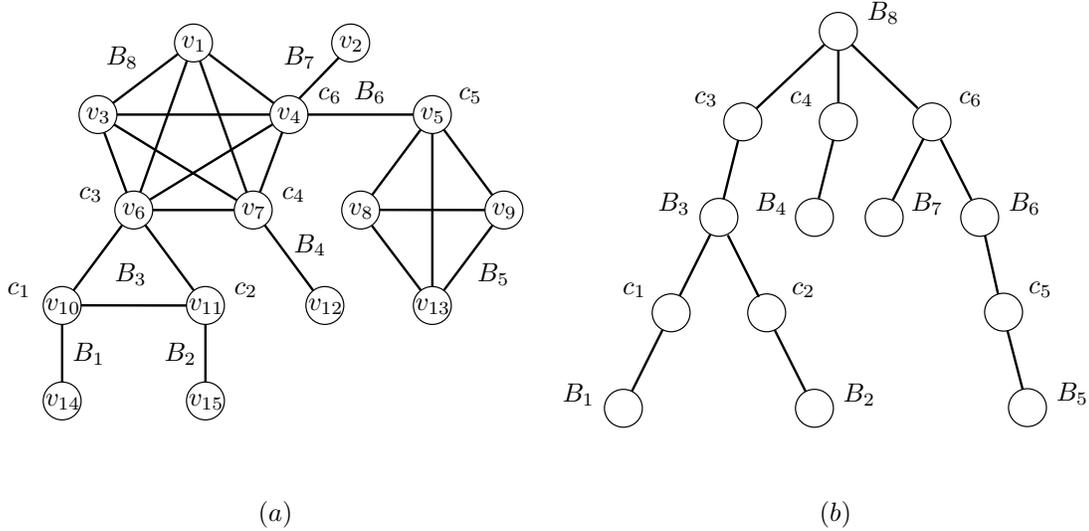}} \caption{$(a)$ A block graph $G$. $(b)$
The corresponding block-cut-vertex graph $G^*$ for the block graph
$G$ in ($a$). In particular, Algorithm 1 considered the blocks of $G$ in turn according to the ordering $B_1, B_2,\ldots, B_8.$} \label{fig:block_graph}
\end{figure}

For any connected graph $G$, a vertex $x\in V(G)$ is called a
\emph{cut-vertex} of $G$, if $G-x$ contains more than one
connected component. A \emph{block} is a maximal connected
subgraph without a cut-vertex. A graph $G$ is called a \emph{block
graph}, if every block in $G$ is a complete graph. Notice that block graphs are a proper superfamily of tree graphs and a proper subfamily of chordal graphs. Suppose $G$ has blocks $B_1, B_2,\ldots,B_x$ and cut vertices $c_1,
c_2,\ldots,c_y$. We define the \emph{block-cut-vertex graph} $G^* =
(V, E)$ of $G$, where \vspace{-23pt}

\begin{align*}
V(G^*) &= \setof{B_1,B_2,\ldots,B_x,c_1, c_2,\ldots,c_y}; \text{~and}\\
E(G^*) &= \setof{(B_i,c_j) \mid c_j \in B_j, 1\le i \le x, 1\le j
\le y }.
\end{align*}

Consequently, \rgb{the} graph $G^*$ is a tree and the leaves in $G^*$ are
precisely the blocks with exactly one cut-vertex in $G$. A block
containing exactly one cut-vertex in $G$ is called \emph{\rgb{a pendant}
block}. It should be noted that, by using the depth first search algorithm, one can recognize the block graphs and construct the block-cut-vertex graphs $G^*$, both in $O(n+m)$ time~\cite{Aho74}. Figure~\ref{fig:block_graph} shows an illustrative example, in which Figure~\ref{fig:block_graph}$(b)$ depicts the corresponding block-cut-vertex graph $G^*$ for the block graph $G$ in Figure~\ref{fig:block_graph}$(a)$. Clearly, $G$ has $8$ blocks $B_1, B_2,\ldots,B_8$ and $6$ cut vertices $c_1,
c_2,\ldots,c_6$. Moreover, the pendant blocks of $G$ are $B_1, B_2, B_4$, $B_5$, and $B_7$.

%%%%%%%%%%%%%%%%%%%%%%%%%%%%%%%%%%%%%%%%%%%%%%%%%%%%%%%%%%%
\subsection{The algorithm}
\label{subsection:algo-dynamic}
%%%%%%%%%%%%%%%%%%%%%%%%%%%%%%%%%%%%%%%%%%%%%%%%%%%%%%%%%%%
In this subsection, given the block-cut-vertex structure of a weighted block graph $G$, we propose an $O(n)$-time algorithm for finding a minimum-weight paired-dominating set of $G$. Before describing the approach in detail, four notations $D(H, u)$, $P(H, u)$, $P'(H, u)$, and $\bar{P}(H, u)$ are defined below, where $H$ is a subgraph of $G$ and $u \in V(H)$. The notations are introduced for the purpose of describing the recursive formulations used in developing dynamic programming algorithms.
\vspace{18pt}
\\
\mbox{} \hspace{10pt} $D(H, u)$ : A minimum-weight dominating set of $H$ \rgb{containing $u$}, and $H[D(H, u)-{u}]$ \\
\mbox{} \hspace{63.5pt} has a perfect matching. \vspace{4pt}\\
\mbox{} \hspace{10.5pt} $P(H, u)$ : A minimum-weight paired-dominating set of $H$ containing $u$. \vspace{4pt}\\
\mbox{} \hspace{7.8pt} $P'(H, u)$ : A minimum-weight paired-dominating set of $H$ \rgb{not containing} $u$.\vspace{4pt}\\
\mbox{} \hspace{10.5pt} $\bar{P}(H, u)$ : A minimum-weight paired-dominating set of $H-u$, and $u$ is not dominated \\ \mbox{} \hspace{63.5pt} by $\bar{P}(H, u)$. \vspace{18pt}

\noindent Clearly, either $P(G, u)$ or $P'(G, u)$ is a minimum-weight paired-dominating set of $G$. For ease of subsequent discussion, $D(H, u)$, $P(H, u)$, $P'(H, u)$, and $\bar{P}(H, u)$ are called a $\kappa_1$-\emph{paired-dominating set}, $\kappa_2$-\emph{paired-dominating set}, $\kappa_3$-\emph{paired-dominating set}, and $\kappa_4$-\emph{paired-dominating set} of $H$ with respect to $u$, respectively. Suppose that $H$ is a weighted block graph and $B$ is a block of $H$ with $V(B) = \setof{u_1, u_2,\ldots, u_k}$. The following lemma shows an useful property which help us design efficient algorithms. 

\begin{figure}[thb]
\centerline{\input{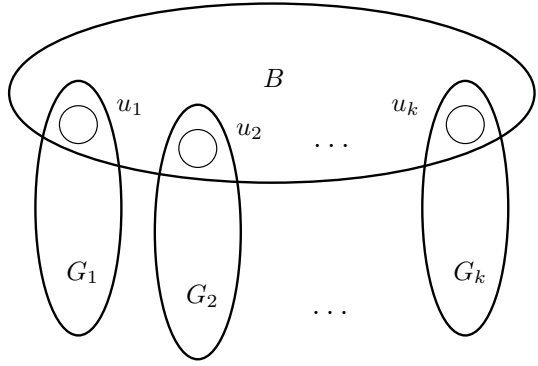}} \vspace{15pt}\caption{A weighted block graph $H = B \cup G_1 \cup G_2 \cup \ldots \cup G_k$.} \vspace{5pt} \label{fig:union}
\end{figure}

\begin{lemma}
\label{lemma:disjoint}
Suppose that $H$ is a weighted block graph and $B$ is a block of $H$ with 
$V(B) = \setof{u_1, u_2,\ldots, u_k}$. If $G_i$ is a maximal connected subgraph in $(H -B) \cup \setof{u_i}$ for $1 \le i \le k$, then $G_i$ and $G_j$ have disjoint vertex sets for $i \not = j$.
\end{lemma}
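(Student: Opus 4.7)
The plan is a proof by contradiction. I would suppose there is a vertex $v \in V(G_i) \cap V(G_j)$ for some $i \neq j$. The first step is to show that $v \notin V(B)$. Because $G_i$ is a subgraph of $(H - B) \cup \{u_i\}$, its vertex set meets $V(B)$ only in $u_i$; similarly for $G_j$ and $u_j$. So any vertex in both $V(G_i)$ and $V(G_j)$ that also lay in $V(B)$ would have to equal both $u_i$ and $u_j$, which is impossible since $i \neq j$. Hence $v$ is a vertex of $H$ outside $V(B)$.

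Using the connectivity of $G_i$ and $G_j$, I would then produce a path $P_i$ from $u_i$ to $v$ inside $G_i$ and a path $P_j$ from $v$ to $u_j$ inside $G_j$. Since each $G_\ell$ meets $V(B)$ only in $u_\ell$, every vertex of $P_i$ other than $u_i$ lies outside $V(B)$, and symmetrically for $P_j$. Concatenating these and reducing to a simple path yields a path $P$ from $u_i$ to $u_j$ in $H$ all of whose internal vertices lie outside $V(B)$. Moreover the edge $u_i u_j$ is not used in this walk (the path $P_i$ never reaches $u_j$ and the path $P_j$ never reaches $u_i$), so $P$ has length at least $2$.

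Because $H$ is a block graph and $B$ is a block, $B$ is a complete subgraph, so the edge $u_i u_j$ lies in $E(B) \subseteq E(H)$. Appending this edge to $P$ produces a cycle $C$ in $H$. The key fact I would invoke is the standard one that any two edges lying on a common cycle of a graph belong to the same block; therefore every edge of $C$ is in $E(B)$. In particular, the first edge of $P$ (incident to $u_i$) lies in $E(B)$, forcing its other endpoint, the second vertex of $P$, to belong to $V(B)$. But this vertex is an internal vertex of $P$, contradicting what was established in the previous step.

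The conceptually delicate point — and the step I expect to be the main obstacle to state cleanly — is the passage from ``$P$ has internal vertices outside $V(B)$'' to a contradiction via the block-equivalence of edges on a common cycle. Once this is in hand, the rest is essentially bookkeeping: checking that $v \notin V(B)$ and that the concatenated walk genuinely produces a simple path of length at least $2$ that does not use the chord $u_iu_j$.
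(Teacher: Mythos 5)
Your proof is correct, and conceptually it runs the same contradiction as the paper's: a vertex shared by $G_i$ and $G_j$ would force the block $B$ to sit inside a strictly larger block. The execution, however, is genuinely different and tighter. The paper disposes of the lemma in one line by asserting that $G[B\cup G_i\cup G_j]$ is a connected subgraph of $H$ without a cut-vertex, contradicting the maximality of $B$; taken literally that assertion is false (if $G_i$ is, say, a long path hanging off $u_i$ through the shared vertex $v$, the internal vertices of that path are cut vertices of the union), so the paper's argument only stands if one reads it as shorthand for precisely the cycle you construct. Your route --- check $v\notin V(B)$, extract a simple $u_i$--$u_j$ path of length at least $2$ with all internal vertices outside $V(B)$, close it with the edge $u_iu_j\in E(B)$ guaranteed by completeness of $B$, and invoke the standard fact that two edges on a common cycle lie in the same block --- makes the contradiction airtight, since it forces an edge leaving $B$ into the block containing $u_iu_j$. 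The step you flagged as delicate is exactly the right pivot, and the bookkeeping you supply (why $v\notin V(B)$, why the path avoids the chord $u_iu_j$) is what the paper omits. No gaps.
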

\begin{proof}
Suppose to the contrary that $V(G_i) \cap V(G_j) \not = \emptyset$. Then $G[B \cup G_i \cup G_j]$ is a connected subgraph of $H$ without a cut-vertex, this contradicts our assumption that $B$ is a maximal connected subgraph without a cut-vertex.  
\end{proof}

\medskip
Refer to Figure~\ref{fig:union} for an illustrative example. In order to obtain a minimum-weight paired-dominating set of $G$, we use dynamic programming to iteratively determine $D(H, u_1)$, $P(H, u_1)$, $P'(H, u_1)$, and $\bar{P}(H, u_1)$ in a bottom-up manner. \rgb{One block is considered in each iteration of the loop.} Suppose the dominating sets $D(G_i, u_i)$, $P(G_i, u_i)$, $P'(G_i, u_i)$, and $\bar{P}(G_i, u_i)$ have been determined in the previous iterations and are recorded in $u_i$ for $1 \le i \le k$. \rgb{We shall show that $D(H, u_1)$, $P(H, u_1)$, $P'(H, u_1)$, and $\bar{P}(H, u_1)$ can all be determined in $O(k)$ time in Section~\ref{section:efficient-implementation}. With the aid of this result, we now propose the main algorithm of this paper. Notice that during the computation, the block-cut-vertex structure $G^*$ of the block graph
$G$ can be exploited to get the relationship among blocks, which help us to apply dynamic programming.}

%first constructs the block-cut-vertex structure $G^*$ of $G$ and

The algorithm first sets the \emph{current graph} $G' = G$ and the set of \emph{processed blocks} $W = \emptyset$. Further, it initially assigns  $D(G[\{v\}],v ) = \setof{v}$, $P(G[\{v\}],v ) = \bigtriangleup$, $P'(G[\{v\}],v ) = \bigtriangleup$ and $\bar P(G[\{v\}],v ) = \emptyset$ to each vertex $v \in V(G)$. Specially, we use $\bigtriangleup$ to denote the empty set with a weight of infinity, i.e., $\bigtriangleup = \emptyset$ and $w(\bigtriangleup) = \infty$. The algorithm then iteratively processes blocks in the repeat loop. During each iteration of the loop, we remove a pendant block $B$ in the current graph $G'$ and determines the dominating sets $D(H, u)$, $P(H, u)$, $P'(H, u)$, and $\bar{P}(H, u)$, where $u$ is the cut vertex and $H$ is the connected component containing $u$ in $G[B \cup W]$. After the execution of the repeat loop, we have only one block left, i.e., the current graph $G'$ is a block and $G^*$ is a vertex. With the information determined in the repeat loop, we now can find the two paired-dominating sets $P(G, u)$ and $P'(G, u)$, where $u$ is an arbitrary vertex in $G'$. Finally, the output $S$ is selected from $P(G, u)$ and $P'(G, u)$ based on the weights of the sets. The steps of the algorithm are detailed below. 

%It then iteratively removes an end block $B$ with exactly one cut vertex in the current graph $G$. During execution of the algorithm, certain blocks of the input graph are removed, the union graph of the removed blocks is called the \emph{processed graph}. 
%
%
% It then iteratively removes an end block $B$ with exactly one cut vertex in the current graph $G$ and  

\vspace{8pt}

\begin{algorithm}
\caption{Finding a paired-dominating set on weighted block graphs}
\begin{algorithmic} [1]
\medskip
\baselineskip 15.8pt
\REQUIRE A weighted block graph $G$ with the block-cut-vertex structure $G^*$ of $G$.
\ENSURE A minimum-weight paired-dominating set $S$ of $G$.

%\STATE construct the block-cut-vertex structure $G^*$ of $G$;
\STATE let $G' \leftarrow G$ and $W \leftarrow \emptyset$;

\FOR{each $v \in V(G)$} %/* initialization : */
\STATE let $D(G[\{v\}],v ) \leftarrow \setof{v}$ and $P(G[\{v\}],v ) \leftarrow \bigtriangleup$;
\STATE let $P'(G[\{v\}],v ) \leftarrow \bigtriangleup$ and $\bar{P}(G[\{v\}],v ) \leftarrow \emptyset$;
\ENDFOR

\REPEAT

\STATE arbitrarily choose a leaf $v_B$ in $G^*$;

\STATE let $B$ be the corresponding pendant block of $v_B$ in $G'$; \\ suppose that $V(B) = \setof{u_1, u_2,\ldots, u_k}$, where $u_1$ is the cut vertex and $G_i$ is the connected component in $G[W]$ such that $V(B) \cap V(G_i) = {u_i}$ for $1 \le i \le k$;

\STATE let $H \leftarrow B \cup G_1 \cup G_2 \cup \ldots \cup G_k$;

\STATE find $D(H, u_1), P(H, u_1), P'(H, u_1), \bar P(H, u_1)$ by using the dominating sets $D(G_i, u_i)$, $P(G_i, u_i)$, $P'(G_i, u_i)$, and $\bar{P}(G_i, u_i)$ determined in the previous iterations;

\STATE record the results $D(H, u_1), P(H, u_1), P'(H, u_1), \bar P(H, u_1)$ in vertex $u_1$;
    
\STATE let $G' \leftarrow G' - \setof{u_2,\ldots, u_k}$ and $W \leftarrow W \cup B$;

\STATE suppose $v_c$ is the neighbor of $v_B$ in $G^*$; \\ let $G^* \leftarrow G^* - \setof{v_B,v_c}$ if $v_c$ is a leaf in $G^* - v_B$, and let $G^* \leftarrow G^* - v_B$ otherwise;

\UNTIL{$G^*$ itself is a vertex}

\STATE find $P(G, u)$ and $P'(G, u)$, where $u$ is an arbitrary vertex in $G'$;
    
\STATE let $S \leftarrow P(G, u)$ if $w(P(G, u)) < w(P'(G, u))$, and let $S \leftarrow P'(G, u)$ otherwise;

\RETURN $S$.
\end{algorithmic}
\end{algorithm}
\baselineskip \skippt

\vspace{-5pt}

\begin{figure}[p]
\centerline{\input{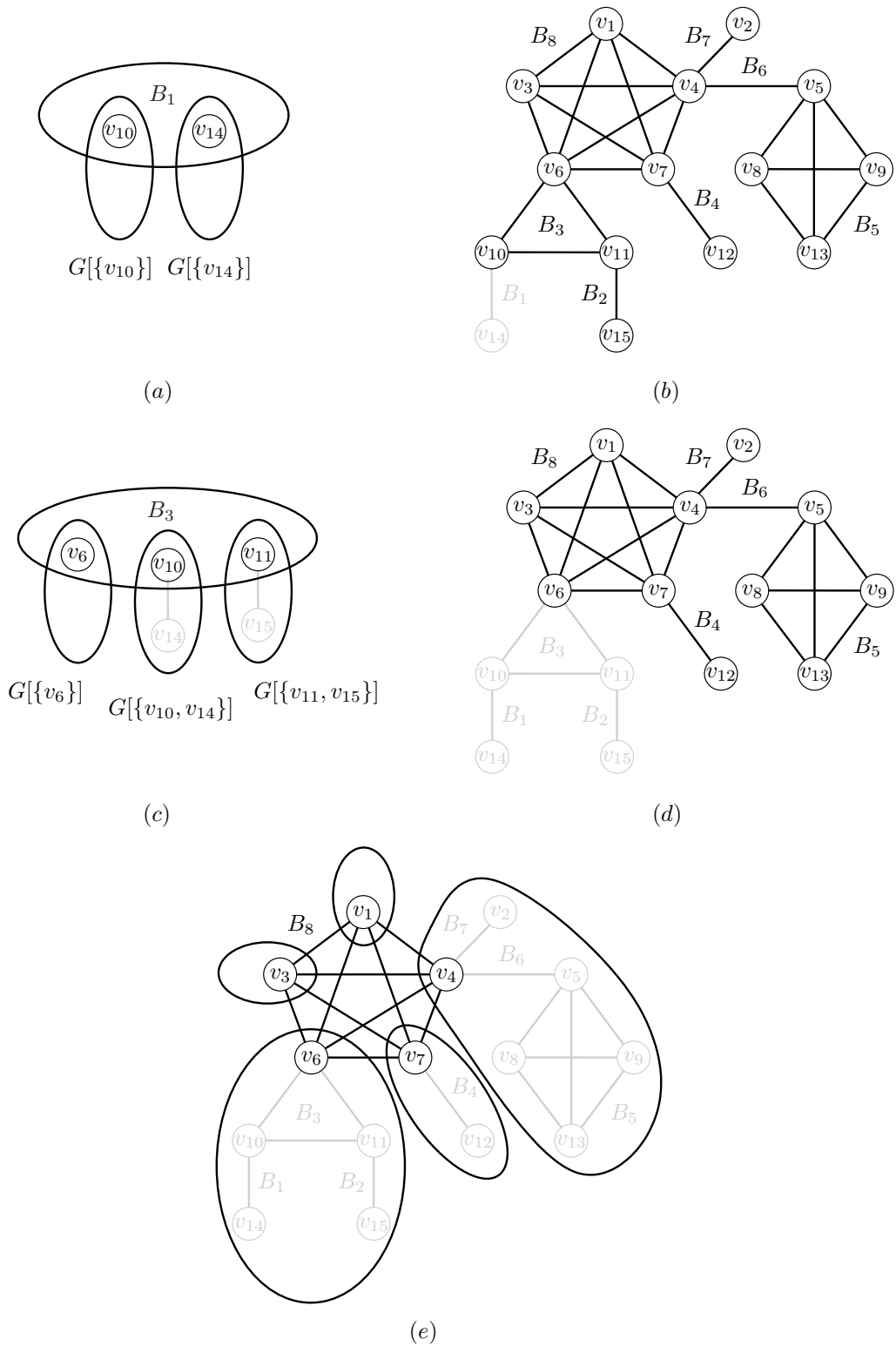}} \vspace{2pt}\caption{The intermediate execution steps of Algorithm 1. The blocks in $G$ are removed with
respect to the ordering $B_1, B_2,\ldots, B_8$.} \vspace{5pt} \label{fig:example}
\end{figure}

For an illustrative example, consider the block graph $G$ in Figure~\ref{fig:block_graph}. In the
beginning, the algorithm sets the default values to each vertex $v$ in $G$.  Then, by the rules of removing blocks and recording results, one block is removed from $G'$ for each iteration of the repeat loop. The blocks in $G$ are removed with respect to the ordering $B_1, B_2,\ldots, B_8$.
Figure~\ref{fig:example}$(a)$ depicts the case that block $B_1 = G[\setof{v_{10},v_{14}}]$ is selected in the first iteration. One can see that $H = G[\setof{v_{10},v_{14}}] \cup  G[\setof{v_{10}}] \cup  G[\setof{v_{14}}]$. Four dominating sets $D(H, v_{10}), P(H, v_{10}), P'(H, v_{10}), \bar P(H, v_{10})$ are recorded in vertex $v_{10}$ by the rules of determining and recording results in Steps 10 and 11. Please refer to Figure~\ref{fig:example}$(b)$ for the result of removing $v_{14}$ from $G'$.

Similarly, Figure~\ref{fig:example}$(c)$ depicts the case that block $B_3 = G[\setof{v_6,v_{10},v_{11}}]$ is selected in the third iteration with $H = G[\setof{v_6,v_{10},v_{11}}]  \cup  G[\setof{v_6}] \cup  G[\setof{v_{10},v_{14}}] \cup  G[\setof{v_{11},v_{15}}]$. Again, by the rules of determining and recording results, vertices $v_{10}$ and $v_{11}$ were removed from $G'$ and the corresponding results are recorded in vertex $v_6$. Refer to Figure~\ref{fig:example}$(d)$ for an illustrative example. After removing blocks $B_1, B_2,\ldots, B_7$, we have exactly one block $B_8$ left in $G'$, i.e., $G^*$ now is a vertex, then the algorithm exits the repeat loop. In Step 15, the two dominating sets $P(G, v_1)$ and $P'(G, v_1)$ are determined by using a similar method of the arguments in Steps 10 and 11. Refer to Figure~\ref{fig:example}$(e)$ for an illustrative example. Clearly, either $P(G, u)$ or $P'(G, u)$ is a minimum-weight paired-dominating set $S$ of $G$ depending on which has the smaller total weight. 

In next section, given the dominating sets $D(G_i, u_i)$, $P(G_i, u_i)$, $P'(G_i, u_i)$, and $\bar{P}(G_i, u_i)$ for $1 \le i \le k$, four dynamic programming procedures are proposed in Subsections~\ref{subsection:finding-D(H, u_1)}--\ref{subsection:finding-bar_P(H, u_1)}, which can determine $D(H, u_1)$, $P(H, u_1)$, $P'(H, u_1)$, and $\bar{P}(H, u_1)$ in $O(k)$ time, respectively. Clearly, the proposed procedures ensure the correctness of the algorithm. For the complexity analysis, suppose that $G$ has blocks $B_1, B_2,\ldots,B_x$. Since the dynamic programming procedures can be completed in $O(k)$ time, Steps 10 and 15 can be implemented in $O(|V(B_1)|+ |V(B_2)|+\ldots+|V(B_x)|)$ time. Recall that all the vertices in $B_i$ are deleted from $G'$ except the cut vertex in each iteration of the repeat loop. This implies that $|V(B_1)|+ |V(B_2)|+\ldots+|V(B_x)| = n + (x -1)$. 

Notice that by using the depth first search algorithm, one can computer a vertex ordering $u_1, u_2,\ldots, u_h$ of a $h$-vertices tree graph $T$ in $O(h)$ time such that $u_i$ is a leaf in $T[\setof{u_1,u_2,\ldots,u_i}]$ for $1 \le i \le h$. Therefore, with an $O(x)$-time preprocessing of $G^*$, it takes $O(1)$ time to implement Step 7 for each iteration of the repeat loop, i.e., determining $v_B$. Meanwhile, since $G$ has at most $n-1$ blocks, i.e., $x \le n - 1$, the repeat loop and Step 15 can be done, both in $O(n)$ time. Further, the other steps can be completed in $O(n)$ time as well. Consequently, we obtain the main result of this paper. 
\begin{theorem}
\label{theorem-1}
Given a weighted block graph $G$ with the block-cut-vertex structure $G^*$ of $G$, a paired-dominating set of $G$ can be determined by Algorithm 1 in $O(n)$ time. 
\end{theorem}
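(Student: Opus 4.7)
The plan is to split the argument into two parts, correctness and running time, both of which essentially assemble ingredients already laid out in the excerpt. I will not re-prove the four subroutines promised for Subsections~\ref{subsection:finding-D(H, u_1)}--\ref{subsection:finding-bar_P(H, u_1)}; instead I will cite their $O(k)$-time correctness as a black box and show that Algorithm~1 calls them on well-defined inputs a controlled number of times.

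For correctness, I would proceed by induction on the number of completed iterations of the repeat loop and prove the invariant that, at the start of each iteration, every cut vertex $u$ still present in $G'$ stores correct values of $D(G_u,u)$, $P(G_u,u)$, $P'(G_u,u)$, $\bar{P}(G_u,u)$ for the subgraph $G_u$ consisting of those blocks already moved into $W$ that are reachable from $u$ through $W$. The base case is immediate from Lines~2--5. For the inductive step, let $B$ be the pendant block chosen in Step~7 with $V(B)=\{u_1,\dots,u_k\}$ and $u_1$ its cut vertex in $G'$. Applying Lemma~\ref{lemma:disjoint} to the current block $B$ inside the processed part $H = B \cup G_1 \cup \cdots \cup G_k$, the subgraphs $G_1, \dots, G_k$ are pairwise vertex-disjoint, which is precisely the decomposition assumed by the subroutines of Subsections~\ref{subsection:finding-D(H, u_1)}--\ref{subsection:finding-bar_P(H, u_1)}. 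Hence Step~10 legitimately combines the inductive data at $u_2, \dots, u_k$ into the four new values at $u_1$, restoring the invariant. When the loop exits, $G'$ is a single block and Step~15 reduces to the same subroutine call; since every paired-dominating set of $G$ either contains $u$ or does not, the minimum of $w(P(G,u))$ and $w(P'(G,u))$ chosen in Step~16 is optimal.

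For the running time I would first argue an amortization over the blocks. Each iteration that processes a block $B_i$ spends $O(|V(B_i)|)$ time in Step~10 by the assumed subroutine complexity, $O(|V(B_i)|)$ time in Step~11 for recording, and $O(|V(B_i)|)$ time in Steps~8 and~12 for removing the processed vertices from $G'$. Summing over all $x$ blocks yields a total of $O\bigl(\sum_{i=1}^{x} |V(B_i)|\bigr)$; a double count of vertex-block incidences shows $\sum_{i=1}^{x} |V(B_i)| = n + (x-1)$ (each cut vertex is counted once per block containing it, contributing one extra for every additional containment, and the total number of extra containments equals $x-1$ in a block tree). Combined with $x \le n-1$, this gives $O(n)$ for the loop. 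For Step~7, I would run a single depth-first search on $G^*$ in an $O(x)$ preprocessing phase to produce a valid leaf-elimination order, reducing each Step~7 to $O(1)$. Initialization (Lines~1--5), the final Step~15, and the comparison in Step~16 are each $O(n)$, and summing everything gives the claimed $O(n)$ bound.

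The main obstacle in the present proof is not computational but conceptual: keeping the loop invariant honest as vertices are repeatedly deleted from $G'$ while the subgraphs $G_u$ stored at surviving cut vertices grow silently in the background. The danger is that a cut vertex might participate in two pendant blocks at different moments and one could double-count pieces of $G_u$; Lemma~\ref{lemma:disjoint} and the fact that a vertex deleted from $G'$ in Step~11 never reappears together eliminate this risk, but this is the point where I would be most careful to write the invariant precisely. Once the invariant is stated correctly, the rest of the proof reduces to citing the four subroutines and performing the amortization above.
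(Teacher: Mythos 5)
Your proposal is correct and follows essentially the same route as the paper: cite the four $O(k)$-time procedures of Section~\ref{section:efficient-implementation} as black boxes for correctness, and bound the loop by the identity $\sum_{i=1}^{x}|V(B_i)| = n+(x-1)$ together with $x\le n-1$ and an $O(x)$ DFS preprocessing of $G^*$ to make Step~7 constant time. The only difference is that you spell out the loop invariant and the induction explicitly (and derive the incidence count by double counting rather than by tracking deletions), whereas the paper compresses the correctness argument into a one-line appeal to the procedures; your version is a more careful writeup of the same proof.
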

%Moreover, the algorithm can be completed in $O(n)$ time if the block-cut-vertex structure of $G$ is given.
%%%%%%%%%%%%%%%%%%%%%%%%%%%%%%%%%%%%%%%%%%%%%%%%%%%%%%%%%%%%%%%%
\section{Finding $D(H, u_1)$, $P(H, u_1)$, $P'(H, u_1)$, and $\bar{P}(H, u_1)$}\label{section:efficient-implementation}
%Correctness and ComplexityAnalysis
%Finding $D(H, u_1)$, $P(H, u_1)$, $P'(H, u_1)$, and $\bar{P}(H, u_1)$
%%%%%%%%%%%%%%%%%%%%%%%%%%%%%%%%%%%%%%%%%%%%%%%%%%%%%%%%%%%%%%%%
Suppose that $H$ is a weighted block graph and $B$ is a block of $H$ with $V(B) = \setof{u_1, u_2,\ldots, u_k}$. For $1 \le i \le k$, let $G_i$ be a maximal connected subgraph in $(H - B) \cup \setof{u_i}$. By Lemma~\ref{lemma:disjoint}, we have $V(G_i)\cap V(G_j) = \emptyset$ for $i \not = j$, refer to Figure~\ref{fig:union} for an illustrative example. \rgb{In this section, given the dominating sets $D(G_i, u_i)$, $P(G_i, u_i)$, $P'(G_i, u_i)$, and $\bar{P}(G_i, u_i)$ for $1 \le i \le k$, we shall show that the four dominating sets $D(H, u_1)$, $P(H, u_1)$, $P'(H, u_1)$, and $\bar{P}(H, u_1)$ can be determined in $O(k)$ time.}

First, some notations are introduced below, for the purpose of describing the procedures. For a set $S$ of sets of vertices, $F(S)$ denotes the set with minimum weight in $S$. Let $S_i^*$ be the set of vertices such that $S_i^* = F(\setof{D(G_i, u_i), P(G_i, u_i), P'(G_i, u_i), \bar P(G_i, u_i)})$ for $2 \le i \le k$. We use $\alpha$ to denote the index in $\setof{2, 3, \ldots, k}$ such that $S_\alpha^* \not = D(G_\alpha, u_\alpha)$ and $w(D(G_\alpha, u_\alpha)) - w(S_\alpha^*)$ is minimized, and $\beta$ to denote the index in $\setof{2, 3, \ldots, k}$ such that $S_\beta^* = D(G_\beta, u_\beta)$ and $w(F(\setof{P(G_\beta, u_\beta), P'(G_\beta, u_\beta), 
\bar P(G_\beta, u_\beta)})) - w(S_\beta^*)$ is minimized. Further, let $r$ denote the number of $S_i^*$ such that $S_i^* = D(G_i, u_i)$, i.e., $r = |\setof{S_i^* \mid S_i^* =D(G_i, u_i) \text{~and~} 2 \le i \le k}|$.

%%%%%%%%%%%%%%%%%%%%%%%%%%%%%%%%%%%%%%%%%% %%%%%%%%%%%%%%%%%%%%%
\subsection{Determination of $D(H, u_1)$}\label{subsection:finding-D(H, u_1)}
%%%%%%%%%%%%%%%%%%%%%%%%%%%%%%%%%%%%%%%%%%%%%%%%%%%%%%%%%%%%%%%%
We first recall that $D(H,u_1)$ is a minimum-weight dominating set of $H$ \rgb{containing $u_1$}, and $H[D(H, u_1)-{u_1}]$ has a perfect matching. By the definition of $D(H, u_1)$, the only potential candidate for being a dominating set of $G_1$ is $D(G_1,u_1)$. Hence, in order to obtain $D(H,u_1)$, we first construct a dominating set $X = D(G_1,u_1) \cup S_2^* \cup S_3^* \cup \ldots \cup S_k^*$. We will show that if $r$ is even, then $S = X$ is a $\kappa_1$-dominating set of $H$ with respect to $u_1$. Otherwise, for the purpose of satisfying the requirement that $H[X - u_1]$ has a perfect matching with minimum cost, we can either replace $S_\alpha^*$ with $D(G_\alpha,u_\alpha)$, or replace $S_\beta^*$ with $F(\setof{P(G_\beta, u_\beta), P'(G_\beta, u_\beta), \bar P(G_\beta, u_\beta)})$. For the former case, a dominating set $X^+ = (X - S_\alpha^*) \cup D(G_\alpha, u_\alpha)$
is created. On the other hand, a dominating set $X^- = (X - S_\beta^*) \cup F(\setof{P(G_\beta, u_\beta), P'(G_\beta, u_\beta), \bar P(G_\beta, u_\beta)})$ is built for the latter case. The output $S = F(\setof{X^+, X^-})$ is selected from $X^+$ and $X^-$ based on the weights of the sets. Similarly, we will show that $S$ is a $\kappa_1$-dominating set of $H$ with respect to $u_1$. The following is a formal description of the procedure.

\vspace{12pt}
\floatname{algorithm}{Procedure}
\begin{algorithm}
\caption{Finding $D(H, u_1)$}
\begin{algorithmic} [1]
\medskip
\baselineskip 19pt
\REQUIRE A weighted block graph $H$ and a block $B$ of $H$ with 
$V(B) = \setof{u_1, u_2,\ldots, u_k}$. 
\\ \hspace{18pt} Dominating sets $D(G_i, u_i)$, $P(G_i, u_i)$, $P'(G_i, u_i)$, and $\bar{P}(G_i, u_i)$ for $1 \le i \le k$.

\ENSURE A $\kappa_1$-paired-dominating set $S$ of $H$ with respect to $u_1$.

%\STATE Suppose that the dominating sets $D(G_i, u_i)$, $P(G_i, u_i)$, $P'(G_i, u_i)$, and $\bar{P}(G_i, u_i)$ have been determined and are recorded in $u_i$ for $1 \le i \le k$.

\STATE determine $S_i^*$ for $2 \le i \le k$;

\STATE determine $\alpha, \beta$, and $r$;  

\STATE let $X \leftarrow D(G_1,u_1) \cup S_2^* \cup S_3^* \cup \ldots \cup S_k^*$;

\STATE let $X^+ \leftarrow (X - S_\alpha^*) \cup D(G_\alpha, u_\alpha)$; 

\STATE let $X^- \leftarrow (X - S_\beta^*) \cup F(\setof{P(G_\beta, u_\beta), P'(G_\beta, u_\beta), \bar P(G_\beta, u_\beta)})$;

\STATE {\bf if} $r$ is even, {\bf then} let $S \leftarrow X$; {\bf otherwise}, let $S \leftarrow F(\setof{X^+, X^-})$;

\RETURN $S$.
\end{algorithmic}
\end{algorithm}
\baselineskip \skippt

\vspace{-10pt}
\begin{lemma}
\label{lemma:finding-D(H, u_1)}
Given the dominating sets $D(G_i, u_i)$, $P(G_i, u_i)$, $P'(G_i, u_i)$, and $\bar{P}(G_i, u_i)$ for $1 \le i \le k$, Procedure~2 outputs a $\kappa_1$-paired-dominating set $S$ of $H$ with respect to $u_1$ in $O(k)$ time.
\end{lemma}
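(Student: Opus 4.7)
The plan is to show that (i) the set $S$ returned by Procedure~2 satisfies the three defining conditions of a $\kappa_1$-paired-dominating set of $H$ with respect to $u_1$, and (ii) $w(S)$ is minimum among all such sets; the running time is a direct bookkeeping argument. For feasibility, observe that $u_1 \in D(G_1, u_1) \subseteq X$, and this membership is preserved by the swaps at $G_\alpha$ or $G_\beta$ since both indices are different from $1$. By Lemma~\ref{lemma:disjoint}, the $V(G_i)$'s meet only at the cut-vertices $u_i \in V(B)$, so every vertex of $G_i - u_i$ is dominated by whichever of $D, P, P', \bar{P}$ was chosen for $G_i$, and each $u_i$ is either dominated internally or by $u_1$ through the complete block $B$. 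For the required perfect matching on $S - u_1$, vertices outside $V(B)$ can only be matched within their own component $G_i$; hence, after using the internal matchings built into each $S_i^*$ (together with that of $D(G_1, u_1) - u_1$), the only unmatched vertices are exactly the $u_i$'s with $S_i^* = D(G_i, u_i)$, namely $r$ of them. Since these all lie in the clique $B$, they can be paired off whenever $r$ is even, giving a perfect matching for $X$. When $r$ is odd, both $X^+$ and $X^-$ flip this parity (by changing one non-$D$ choice to $D$, or one $D$ choice to its cheapest non-$D$ alternative), and $F(\setof{X^+, X^-})$ selects whichever flip is cheaper.

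For optimality, I would decompose an arbitrary $\kappa_1$-paired-dominating set $D^*$ of $H$ with respect to $u_1$ as $D^* = D_1^* \cup D_2^* \cup \cdots \cup D_k^*$ with $D_i^* = D^* \cap V(G_i)$; these pieces are disjoint by Lemma~\ref{lemma:disjoint}. A case analysis on whether $u_i \in D_i^*$ and on how $u_i$ is matched (within $G_i$ vs.\ within $B$) or dominated shows that, for $i \ge 2$, $D_i^*$ is a feasible candidate for exactly one of $D(G_i, u_i), P(G_i, u_i), P'(G_i, u_i), \bar{P}(G_i, u_i)$, and $D_1^*$ is a candidate for $D(G_1, u_1)$. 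Since $u_1$ is excluded from the required matching, every within-$B$ matching edge pairs two cut-vertices $u_i, u_j$ with $i, j \ge 2$, which forces the number of indices of $D$-type to be even. Thus $w(D^*)$ is bounded below by the constrained minimum of $w(D(G_1, u_1)) + \sum_{i \ge 2} w(T_i(G_i, u_i))$ over type-selections $T_i \in \setof{D, P, P', \bar{P}}$ subject to the even-$D$ parity constraint, and this constrained minimum equals the weight produced by $X$ (when $r$ is even) or $F(\setof{X^+, X^-})$ (when $r$ is odd).

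The main obstacle is justifying that, when $r$ is odd, one can always repair parity with a \emph{single} swap at $G_\alpha$ or $G_\beta$---a priori one might fear that a cheaper solution requires simultaneously changing two or more $S_i^*$ choices. This is resolved by observing that the objective decomposes additively over $i$, the only global constraint is the parity of the $D$-count, and each deviation $T_i \ne S_i^*$ contributes a non-negative increment $w(T_i(G_i, u_i)) - w(S_i^*)$; hence the constrained optimum is reached by starting from the unconstrained $S_i^*$ assignment and flipping exactly one index's type at the cheapest available cost, which is precisely $F(\setof{X^+, X^-})$. For complexity, Steps~1--2 sweep $i \in \setof{2,\ldots,k}$ once to compute $S_i^*, \alpha, \beta, r$ in $O(k)$ time; Steps~3--6 perform $O(1)$ weight comparisons and represent each set union symbolically (as a collection of pointers to the previously recorded sets) so that no explicit vertex enumeration occurs. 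Altogether Procedure~2 runs in $O(k)$ time.
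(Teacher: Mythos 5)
Your proof is correct and follows essentially the same approach as the paper's: the same candidates $X$, $X^{+}$, $X^{-}$, the same parity observation on $r$ (the unmatched vertices after using the internal matchings are exactly the $u_i$ with $S_i^{*}=D(G_i,u_i)$, which can be paired inside the clique $B$ iff their number is even), and the same symbolic-representation argument for the $O(k)$ bound. You go further than the paper in the two places it dismisses with ``one can verify'' --- the lower bound obtained by decomposing an arbitrary $\kappa_1$-set into its traces $D^{*}\cap V(G_i)$ and classifying each trace as a candidate for one of the four types subject to the even-$D$ parity constraint, and the justification that a \emph{single} cheapest parity-repairing swap attains the constrained minimum because the objective is additive over $i$ with the parity of the $D$-count as the only coupling constraint --- so your write-up is the more complete of the two.
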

\begin{proof}
\rgb{We first introduce data structures which enable us to compute $D(H, u_1)$ in $O(k)$ time. For each element of array $select$, with $2 \le i \le k$, $select[i]$ is used to denote the selection of vertices set for $S_i^*$, i.e., $select[i] = 1$ if $S_i^* = D(G_i, u_i),\ldots, select[i] = 4$ if $S_i^* = \bar{P}(G_i, u_i)$. Further, variable $s$ is used to denote the selection of vertices set for $S$, i.e., $s = 1$ if $S = X$, $s = 2$ if $S = X^+$, and $s = 3$ if $S = X^-$. Meanwhile, variable $w$ is used to denote the weight of $S$. With the aid of above data structures and variables $\alpha, \beta$, and $r$, the procedure certainly can be implemented in $O(k)$ time.} To prove that $S$ is a $\kappa_1$-paired-dominating set of $H$ with respect to $u_1$, it suffices to show that the output $S$ is a minimum-weight dominating set of $H$ such that $u_1 \in S$ and $H[S - u_1]$ has a perfect matching. By the definition of $D(H, u_1)$, the only potential candidate for being a dominating set of $G_1$ is $D(G_1,u_1)$. Hence, we have $D(G_1, u_1) \subseteq S$. Since $u_1 \in D(G_1, u_1)$ and $B$ is a clique, all the three sets $X, X^+$ and $X^-$ are dominating \rgb{sets} of $H$. Therefore, it remains to show that the weight $w(S)$ of $S$ is minimized subject to the condition that $H[S - u_1]$ contains a perfect matching.

Notice that, for $2 \le i \le k$, both $G_i[D(G_i, u_i)- u_i]$ and $G_i[P(G_i, u_i)]$ contain perfect matchings and $u_i \not \in P'(G_i, u_i) \cup \bar{P}(G_i, u_i)$. \rgb{Therefore, to satisfy the condition that $H[S - u_1]$ contains a perfect matching, the cardinality of $\setof{u_i \mid D(G_i, u_i) \in S \text{~and~} 2 \le i \le k}$ must be even. Hence, if $r$ is even, then $H[X-u_1]$ contains a perfect matching and the weight $w(X)$ of $X$ is minimized, as an immediate consequence of the selections of $S_i^*$. 
Next, suppose that $r$ is odd. To minimize the cost, we can replace one $S_i^*$ with $D(G_i, u_i)$  or replace one $S_j^*$ with $P(G_j, u_j), P'(G_j, u_j)$, or  $\bar{P}(G_j, u_j)$, where $S_i^* \not = D(G_i, u_i)$, $S_j^* = D(G_j, u_j)$, and $2 \le i,j \le k$. For the former case, a dominating set $X^+ = (X - S_\alpha^*) \cup D(G_\alpha, u_\alpha)$ is created. On the other hand, a dominating set $X^- = (X - S_\beta^*) \cup F(\setof{P(G_\beta, u_\beta), P'(G_\beta, u_\beta), \bar P(G_\beta, u_\beta)})$ is built for the latter case.} And, we select $S$ from $X^+$ and $X^-$ based on the weights of the sets. As a consequence of selections of $S_\alpha^*$, $S_\beta^*$, and $S_i^*$ for $2 \le i \le k$, one can verify that $S = F(\setof{X^+, X^-})$ is a minimum-weight dominating set of $H$ such that $H[S - u_1]$ contains a perfect matching. 
\end{proof}
%\vspace{8pt}
\subsection{Determination of $P(H, u_1)$}\label{subsection:finding-P(H, u_1)}
%The notations defined in Subsection~\ref{subsection:finding-D(H, u_1)} are used here.
%\vspace{3pt} 
Notice that $P(H,u_1)$ is a minimum-weight dominating set of $H$ containing $u_1$ and $H[P(H,u_1)]$ has a perfect matching. Therefore, either $D(G_1,u_1)\subseteq P(H,u_1)$ or $P(G_1,u_1)\subseteq P(H,u_1)$ is a dominating set of $G_1$. In order to obtain $P(H,u_1)$, we construct the six dominating sets $X, X^+$, $X^-$, $Y, Y^+$, and $Y^-$ of $H$. The dominating sets $X, X^+$ and $X^-$ are created for the situation when $D(H,u_1)$ is a dominating set of $G_1$. Meanwhile, the dominating sets $Y, Y^+$ and $Y^-$ are built for  the situation when $P(G_1,u_1)$ is a dominating set of $G_1$, where $Y = P(G_1,u_1)\cup S_2^* \cup S_3^* \cup \dots \cup S_k^*$, $Y^+ = (Y-S_\alpha^*) \cup D(G_\alpha, u_\alpha)$, and $Y^- = (Y-S_\beta^*)\cup F(\{P(G_\beta, u_\beta),P'(G_\beta, u_\beta),\bar P(G_\beta, u_\beta)\})$. 

If $r$ is even, then \rgb{all} the induced subgraphs $H[X^+], H[X^-]$ and $H[Y]$ contain perfect matchings. The output $S = F(\setof{X^+, X^-,Y})$ is selected from $X^+, X^-$ and $Y$ based on the weights of the sets. We will show that $S$ is a $\kappa_2$-dominating set of $H$ with respect to $u_1$. Similarly, if $r$ is odd, then all the induced subgraphs $H[X], H[Y^+]$ and $H[Y^-]$ contain perfect matchings. And, we will show that the output $S = F(\setof{X, Y^+,Y^-})$ is a $\kappa_2$-dominating set of $H$ with respect to $u_1$ in this situation. The procedure is detailed below.

\vspace{10pt}
\floatname{algorithm}{Procedure}
\begin{algorithm}
\caption{Finding $P(H, u_1)$}
\label{algorithm:finding-p}
\begin{algorithmic} [1]
\medskip
\baselineskip 19pt
\REQUIRE A weighted block graph $H$ and a block $B$ of $H$ with 
$V(B) = \setof{u_1, u_2,\ldots, u_k}$. 
\\ \hspace{18pt} Dominating sets $D(G_i, u_i)$, $P(G_i, u_i)$, $P'(G_i, u_i)$, and $\bar{P}(G_i, u_i)$ for $1 \le i \le k$. 

\ENSURE A $\kappa_2$-paired-dominating set $S$ of $H$ with respect to $u_1$.

%\STATE Suppose that the dominating sets $D(G_i, u_i)$, $P(G_i, u_i)$, $P'(G_i, u_i)$, and $\bar{P}(G_i, u_i)$ have been determined and are recorded in $u_i$ for $1 \le i \le k$.

\STATE determine $S_i^*$ for $2 \le i \le k$;

\STATE determine $\alpha, \beta$, and $r$;

\STATE find the dominating sets $X, X^+$ and $X^-$ as described in Procedure $2$;  

\STATE let $Y \leftarrow P(G_1,u_1)\cup S_2^* \cup S_3^* \cup \dots \cup S_k^*$; 
  
\STATE let $Y^+ \leftarrow (Y-S_\alpha^*) \cup D(G_\alpha, u_\alpha);$

\STATE let $Y^- \leftarrow (Y-S_\beta^*)\cup F(\{P(G_\beta, u_\beta),P'(G_\beta, u_\beta),\bar P(G_\beta, u_\beta)\})$;

\STATE {\bf if} $r$ is even, {\bf then} let $S \leftarrow F(\setof{X^+, X^-,Y})$; {\bf otherwise}, let $S \leftarrow F(\setof{X, Y^+, Y^-})$;

\RETURN $S$.

\end{algorithmic}
\end{algorithm}
\baselineskip \skippt

\vspace{-8pt}
\begin{lemma}
\label{lemma:finding-P(H, u_1)}
Given the dominating sets $D(G_i, u_i)$, $P(G_i, u_i)$, $P'(G_i, u_i)$, and $\bar{P}(G_i, u_i)$ for $1 \le i \le k$, Procedure~3 outputs a $\kappa_2$-paired-dominating set $S$ of $H$ with respect to $u_1$ in $O(k)$ time.
\end{lemma}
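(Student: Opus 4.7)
The plan is to follow the structure of the proof of Lemma~\ref{lemma:finding-D(H, u_1)}, splitting the argument into a runtime discussion and a correctness discussion. For the runtime I would reuse the bookkeeping of Procedure~2: an array $select$ recording for each $2 \le i \le k$ which of the four options realizes $S_i^*$, together with scalars $\alpha$, $\beta$, $r$, and a pointer $s$ that identifies which of the six candidate sets $X, X^+, X^-, Y, Y^+, Y^-$ is currently selected. The candidate sets themselves need not be materialized; only their weights and the pointer $s$ must be maintained. Under this representation Steps 1--2 are $O(k)$ scans, Steps 3--6 update weights in $O(1)$ each, Step 7 is $O(1)$, and the final assembly of $S$ from $s$ and $select$ takes $O(k)$, so the procedure runs in $O(k)$ time.

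For correctness I would first verify that the returned $S$ is a paired-dominating set of $H$ containing $u_1$. Since $u_1 \in D(G_1,u_1) \cap P(G_1,u_1)$, in every case $u_1 \in S$, and because $B$ is a clique containing $u_1$, every vertex of $B$ is dominated; for each $i \ge 2$, the part of $S$ in $V(G_i)$ is one of $D(G_i,u_i), P(G_i,u_i), P'(G_i,u_i), \bar{P}(G_i,u_i)$, each of which dominates $V(G_i) \setminus \{u_i\}$, and the remaining vertex $u_i$ is dominated by $u_1 \in B$. For the perfect matching on $H[S]$, I exploit the fact that edges of $H$ lie either inside some $G_i$ or inside $B$. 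Each option chosen in $G_i$ admits a matching of its vertices except possibly $u_i$, and $u_i$ is unmatched inside $G_i$ precisely when the $D$-type option is selected (this includes the choice in $G_1$). These internally unmatched $u_i$'s must be paired off inside the clique $B$, which is possible if and only if their total count is even. A case analysis on the parity of $r$ and on whether $D(G_1,u_1)$ or $P(G_1,u_1)$ is used shows that the six candidates precisely cover those constructions satisfying the parity requirement: when $r$ is even, the valid candidates are $X^+, X^-, Y$, and when $r$ is odd, they are $X, Y^+, Y^-$.

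For minimum weight I would argue that any paired-dominating set $T$ of $H$ with $u_1 \in T$ must, on each $G_i$ with $i \ge 2$, coincide with one of the four standard options, and on $G_1$ must contain either $D(G_1,u_1)$ or $P(G_1,u_1)$. Since $S_i^*$ is defined as the minimum-weight option in $G_i$, every deviation from $\{S_i^*\}_{i \ge 2}$ adds a nonnegative increment. The parity condition forces the number of type-changing deviations (between $D$-type and non-$D$-type selections) to be odd whenever $r$ and the choice in $G_1$ are mismatched with the required parity, so the minimum weight subject to this constraint is attained by exactly one such deviation. The cheapest single deviation turning a non-$D$ selection into a $D$ selection is captured by $\alpha$, and the cheapest one turning a $D$ selection into a non-$D$ selection is captured by $\beta$. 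Hence $F(\{X^+, X^-, Y\})$ when $r$ is even and $F(\{X, Y^+, Y^-\})$ when $r$ is odd realize the minimum weight.

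The main obstacle I anticipate is the combinatorial claim that one deviation suffices at the optimum; the argument is essentially an exchange argument, observing that any configuration with $\ge 3$ deviations can be reduced to one with a single deviation while preserving parity and not increasing weight, since each coordinate-wise replacement by $S_i^*$ cannot increase the total weight. A minor technicality is the degenerate cases where $\alpha$ or $\beta$ does not exist (for example when every $S_i^*$ is of type $D$, or none is), which I would handle uniformly by treating the corresponding candidate as having weight $\infty$ in the spirit of the $\bigtriangleup$ convention of Algorithm~1.
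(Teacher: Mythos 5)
Your proof is correct and follows essentially the same route as the paper's: the same six candidate sets $X, X^+, X^-, Y, Y^+, Y^-$, the same parity case analysis on $r$ to decide which three are feasible, and the same $O(k)$ bookkeeping borrowed from Procedure~2. In fact your explicit exchange argument for why a single cheapest type-changing deviation suffices, and your handling of the degenerate cases where $\alpha$ or $\beta$ does not exist, are more detailed than the paper's proof, which leaves these points as ``one can verify.''
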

\begin{proof}
By using a similar method of the arguments in Lemma~\ref{lemma:finding-D(H, u_1)}, one can show that the procedure can be completed in $O(k)$ time. To prove the correctness of the procedure, it suffices to show that the output $S$ is a minimum-weight dominating set of $H$ such that $u_1 \in S$ and $H[S]$ contains a perfect matching. Further, since $v_1 \in D(G_1, u_1) \cap P(G_1, u_1)$ and $B$ is a clique, all $X, X^+, X^-, Y, Y^+$, and $Y^-$ are dominating sets of $H$. Thus, it remains to show that  the weight $w(S)$ of $S$ is minimized subject to the condition that $H[S]$ contains a perfect matching. 

Notice that, for $2 \le i \le k$, both $G_i[D(G_i, u_i)- u_i]$ and $G_i[P(G_i, u_i)]$ contain perfect matchings and $u_i \not \in P'(G_i, u_i) \cup \bar{P}(G_i, u_i)$. We first consider the situation when $r$ is even. For the case when $D(G_1,u_1)$ is a dominating set of $G_1$, in order to satisfy the condition that $H[X]$ contains a perfect matching with minimum cost, we can either replace $S_\alpha^*$ with $D(G_\alpha,u_\alpha)$ or replace $S_\beta^*$ with $F(\setof{P(G_\beta, u_\beta), P'(G_\beta, u_\beta), \bar P(G_\beta, u_\beta)})$. Thus, $X^+$ and $X^-$ are the two potential candidates for $S$ when $D(G_1,u_1)\subseteq P(H,u_1)$. For the case when $P(G_1,u_1)$ is a dominating set of $G_1$, $H[Y]$ contains a perfect matching. We select $S = F(\setof{X^+, X^-,Y})$ from $X^+$, $X^-$ and $Y$ based on the weights of the sets. As a consequence of selections of $S_\alpha^*$, $S_\beta^*$, and $S_i^*$ for $2 \le i \le k$, one can verify that the output $S$ is a minimum-weight dominating set of $H$ such that $H[S]$ contains a perfect matching. Using a similar method of the above arguments, one can show that the correctness also holds for the situation when $r$ is odd. 
\end{proof}

\subsection{Determination of $P'(H, u_1)$}\label{subsection:finding-P'(H, u_1)}

Recall that $P'(H,u_1)$ is a minimum-weight dominating set of $H$ not containing $u_1$ and $H[P'(H,u_1)]$ has a perfect matching. 
Therefore, either $P'(G_1,u_1)\subseteq P'(H,u_1)$ or $\bar{P}(G_1,u_1)\subseteq P'(H,u_1)$ is a dominating set of $G_1$. For ease of subsequent discussion, we consider the two cases $P'(G_1,u_1)\subseteq P'(H,u_1)$ and $\bar{P}(G_1,u_1)\subseteq P'(H,u_1)$, respectively, in the rest of this subsection. More concretely, a paired-dominating set $Q_1$ is created for the former situation. Meanwhile, a paired-dominating set $Q_2$ is built for the latter situation. Clearly, $P'(H, u_1)$ can be selected from $Q_1$ and $Q_2$ based on the weights of the sets.

\subsubsection{Finding $Q_1$}\label{subsubsection:finding-Q_1}
\vspace{-3pt}

Below we present an $O(k)$-time procedure for finding $Q_1$. The procedure solves the problem by considering eight cases $C_1, C_2, \ldots, C_8$ depending on $S_i^*$ and $r$. For $1 \le i \le 8$, the case $C_i = (c_1, c_2,c_3,c_4, c_5)$ is an ordered 5-tuple. For $1 \le j \le 5$, $c_j = 1$ if condition $D_j$ holds, and $c_j = 0$ otherwise. \rgb{ Further, $c_j = ``*$'' means ``do not care'', i.e., condition $D_j$ is not a factor in this case. }The five conditions $D_1,D_2,\dots,D_5$ are defined as follows:
%may or may not hold in this case
\vspace{12pt}
\\
\mbox{} \hspace{55pt} $D_1$ : $S_i^* = P(G_i,u_i)$ for some $2 \le i \le k$. \\
\mbox{} \hspace{55pt} $D_2$ : $r$ is odd. \\
\mbox{} \hspace{55pt} $D_3$ : $r$ is equal to $1$.\\
\mbox{} \hspace{55pt} $D_4$ : $r$ is equal to $0$.\\
\mbox{} \hspace{55pt} $D_5$ : $S_i^* = \bar{P}(G_i,u_i)$ for some $2 \le i \le k$. \vspace{10pt}

\noindent Then, we define the cases $C_1 = (1,1,*,*,*)$,  $C_2 = (1,0,*,*,*)$, $C_3 = (0,1,1,*,1)$, $C_4 = (0,1,1,*,0)$, $C_5 = (0,1,0,*,*)$, $C_6 = (0,0,*,1,1)$, $C_7 = (0,0,*,1,0)$, and $C_8 = (0,0,*,0,*)$. For example, case $C_1$ represents the situation when there exists an index $\ell$ such that $S_\ell^* = P(G_\ell,u_\ell)$ with $2 \le \ell \le k$ and $r$ is an odd number. Further, case $C_7$ represents the situation when there exists no index $\ell$ such that $S_\ell^* = P(G_\ell,u_\ell)$, or $S_\ell^* = \bar{P}(G_\ell,u_\ell)$ and $r = 0$, i.e., $S_i^* = P'(G_i,u_i)$ for $2 \le i \le k$. Moreover, one can verify that all the possible combinations of the five conditions have been considered.

Next, some notations and paired-dominating sets are introduced. Let ${\alpha'}$ be the index in $\setof{2, 3, \ldots, k} - \setof{\alpha}$ such that $S_{\alpha'}^* \not = D(G_{\alpha'}, u_{\alpha'})$ and $w(D(G_{\alpha'}, u_{\alpha'})) - w(S_{\alpha'}^*)$ is minimized. Let $\gamma$ be the index in $\setof{2, 3, \ldots, k}$ \rgb{such that $S_\gamma^* \not = P(G_\gamma, u_\gamma)$ and $w(P(G_\gamma, u_\gamma)) - w(S_\gamma^*)$ is minimized.} Let $I = \setof{i \mid S_i^* = \bar{P}(G_i, u_i) \text{~and~} 2 \le i \le k}$. We define the following paired-dominating sets of $H$, which are the potential candidates for $Q_1$.
%Let $\delta$ be the index in $\setof{2, 3, \ldots, k}$ such that $S_\delta^* = D(G_\delta, u_\delta)$ and $\min \setof{w(P(G_\delta, u_\delta)), w(P'(G_\delta, u_\delta))} - w(S_\delta^*$) is minimized. 
%Let $\delta$ be the index in $\setof{2, 3, \ldots, k}$ such that $S_\delta^* = D(G_\delta, u_\delta)$ and $\min \setof{w(P(G_\delta, u_\delta)), w(P'(G_\delta, u_\delta))} - w(S_\delta^*$) is minimized.
%Let $\epsilon$ be the index in $\setof{2, 3, \ldots, k}$ such that $S_\epsilon^* = D(G_\epsilon, u_\epsilon)$ and $\min \setof{w(P'(G_\epsilon, u_\epsilon)), w(\bar{P}(G_\delta, u_\delta))} - w(S_\epsilon^*$) is minimized.
%We further introduced some paired-dominating sets used in describing the procedure.
\vspace{10pt}
\\
\mbox{} \hspace{55pt} $Z_1 \hspace{3.4pt}= P'(G_1,u_1)\cup S_2^* \cup S_3^* \cup \ldots \cup S_k^*$. \\
\mbox{} \hspace{55pt} $Z_1^+ \hspace{0.2
pt}= (Z_1-S_\alpha^*) \cup D(G_\alpha,u_\alpha).$\\
\mbox{} \hspace{55pt} \hspace{0.1pt}$Z_1^- = (Z_1-S_\beta^*)\cup F(\{P(G_\beta, u_\beta),P'(G_\beta, u_\beta),\bar P(G_\beta, u_\beta)\})$.\\
\mbox{} \hspace{55pt} $T_1 \hspace{4.5pt} = (Z_1-S_\gamma^*) \cup P(G_\gamma, u_\gamma)$.\\
\mbox{} \hspace{55pt} $T_2 \hspace{4.5pt} = (Z_1-S_\alpha^*-S_{\alpha'}^*) \cup D(G_\alpha,u_\alpha)\cup D(G_{\alpha'}, u_{\alpha'})$.\\
\mbox{} \hspace{55pt} $T_3 \hspace{4.5pt} = (Z_1-\cup_{i\in I}S_i^*)\cup (\cup_{i\in I}P'(G_i,u_i))$.\\
\mbox{} \hspace{55pt} $T_4 \hspace{4.5pt} = (Z_1-S_\beta^*)\cup P(G_\beta,u_\beta)$.\\
\mbox{} \hspace{55pt} $T_5 \hspace{4.5pt} = (Z_1-S_\beta^*)\cup F(\{P(G_\beta, u_\beta), P'(G_\beta, u_\beta)\})$.\\
\mbox{} \hspace{55pt} $T_6 \hspace{4.5pt} =  (Z_1-S_\gamma^*-S_\beta^*)\cup P(G_\gamma,u_\gamma)\cup F(\{P'(G_\beta,u_\beta), \bar P(G_\beta,u_\beta)\})$.\\
\mbox{} \hspace{55pt} $T_7 \hspace{4.5pt} = (Z_1-S_\gamma^*-S_\beta^*)\cup P(G_\gamma, u_\gamma) \cup \bar P(G_\beta,u_\beta)$.\\
\mbox{} \hspace{55pt} $T_8 \hspace{4.5pt} = (Z_1-\cup_{i\in I}S_i^*-S_\beta^*)\cup (\cup_{i\in I}P'(G_i,u_i))\cup P'(G_\beta,u_\beta)$. \vspace{11pt}

\noindent As mentioned earlier, we solve the problem by considering the eight cases $C_1, C_2, \ldots, C_8$. The relations between the cases $C_1, C_2, \ldots, C_8$ and the dominating sets $Z_1$, $Z_1^+$, $Z_1^-$, $T_1, \dots, T_8$ are detailed in Procedure~\ref{algorithm:finding-Q_1}. We will prove its correctness and analyze its running time in Lemma~\ref{lemma:finding-Q_1}.

\vspace{12pt}
\floatname{algorithm}{Procedure}
\begin{algorithm}
\caption{Finding $Q_1$}
\begin{algorithmic} [1]
\label{algorithm:finding-Q_1}
\medskip
\baselineskip 19pt
\REQUIRE A weighted block graph $H$ and a block $B$ of $H$ with 
$V(B) = \setof{u_1, u_2,\ldots, u_k}$. 
\\ \hspace{18pt} Dominating sets $D(G_i, u_i)$, $P(G_i, u_i)$, $P'(G_i, u_i)$, and $\bar{P}(G_i, u_i)$ for $1 \le i \le k$. 

\ENSURE A minimum-weight dominating set $S$ of $H$ such that $u_1 \not \in S$, $H[S]$ has a perfect \\ \hspace{32.7pt}matching, and $P'(G_1,u_1)\subseteq S$.

%\IF{$C_1$ or $C_5$ holds}
%\STATE let $S \leftarrow F(\setof{Z_1^+, Z_1^-})$;
%\ELSIF{$C_2$ or $C_7$ or $C_8$ holds}
%\STATE let $S \leftarrow Z_1$;
%\ELSIF{$C_3$ holds}
%\STATE let $S \leftarrow F(\setof{Z_1^+, T_4, T_6, T_8})$;
%\ELSIF{$C_4$ holds}
%\STATE let $S \leftarrow F(\setof{Z_1^+, T_5, T_7})$;
%\ELSIF{$C_6$ holds}
%\STATE let $S \leftarrow F(\setof{T_1, T_2, T_3})$;
%\ENDIF

\STATE determine the paired-dominating sets $Z_1$, $Z_1^+$, $Z_1^-$, $T_1, \dots, T_8$;

\STATE {\bf if} $C_1$ or $C_5$ holds, {\bf then} let $S \leftarrow F(\setof{Z_1^+, Z_1^-})$;

\STATE {\bf if} $C_2$ or $C_7$ or $C_8$ holds, {\bf then} let $S \leftarrow Z_1$;

\STATE {\bf if} $C_3$ holds, {\bf then} let $S \leftarrow F(\setof{Z_1^+, T_4, T_6, T_8})$;

\STATE {\bf if} $C_4$ holds, {\bf then} let $S \leftarrow F(\setof{Z_1^+, T_5, T_7})$;

\STATE {\bf if} $C_6$ holds, {\bf then} let $S \leftarrow F(\setof{T_1, T_2, T_3})$;

\RETURN $S$.

\end{algorithmic}
\end{algorithm}
\baselineskip \skippt

\vspace{-10pt}
\begin{lemma}
\label{lemma:finding-Q_1}
Given the dominating sets $D(G_i, u_i)$, $P(G_i, u_i)$, $P'(G_i, u_i)$, and $\bar{P}(G_i, u_i)$ for $1 \le i \le k$, Procedure~\ref{algorithm:finding-Q_1} outputs a minimum-weight dominating set $S$ of $H$ such that $u_1 \not \in S$, $H[S]$ has a perfect matching, and $P'(G_1,u_1)\subseteq S$. Moreover, the procedure can be completed in $O(k)$ time.
\end{lemma}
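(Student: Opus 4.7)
My plan is to mirror the style of Lemma~\ref{lemma:finding-D(H, u_1)}: establish the $O(k)$ running time via a pointer-based representation of the candidate sets, and then prove correctness by a case-by-case exchange argument. For the running time, all auxiliary quantities ($\{S_i^*\}_{i=2}^k$, $\alpha$, $\alpha'$, $\beta$, $\gamma$, $r$, and $I$) can be determined in a single pass over $i = 2, \ldots, k$. Every candidate among $Z_1, Z_1^+, Z_1^-, T_1, \ldots, T_8$ differs from $Z_1$ in at most two indices, so each can be stored implicitly as the base selection plus $O(1)$ overrides, and its total weight evaluated in $O(1)$ time once $w(Z_1)$ has been accumulated. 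Hence Steps~1--6 each cost $O(k)$, and materializing the returned $S$ at the end is also $O(k)$.

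For correctness I would first isolate the global constraints that any feasible $S$ with $P'(G_1,u_1) \subseteq S$ and $u_1 \notin S$ must satisfy. Since $P'(G_1,u_1)$ already dominates $V(G_1)$ (including $u_1$), and each $G_i$ with $i \ge 2$ contributes exactly one of $\{D(G_i,u_i), P(G_i,u_i), P'(G_i,u_i), \bar P(G_i,u_i)\}$, the nontrivial requirements reduce to two: \emph{parity}, namely the number of indices contributing $D(G_i,u_i)$ must be even so that each such $u_i$ (unmatched inside $G_i$) can be paired inside the clique $B$ where $u_1$ is unavailable; and \emph{$\bar P$-domination}, namely if some index contributes $\bar P(G_i,u_i)$ then at least one other index must contribute $D$ or $P$, since only these place a $u_j \in S \cap V(B)$ capable of dominating $u_i$.

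With these invariants in hand I would verify each of the eight cases. In $C_2, C_7, C_8$ the baseline $Z_1$ already satisfies both invariants, and its weight is minimum because each $S_i^*$ is itself a pointwise minimum. In $C_1$ and $C_5$ only parity fails, so the cheapest single-index repair is either to add a $D$ at the optimal non-$D$ index $\alpha$ (yielding $Z_1^+$) or to drop the cheapest existing $D$ at $\beta$ (yielding $Z_1^-$). In $C_4$ the same parity repair is needed but with $r = 1$, so the swap at $\beta$ must avoid creating a $\bar P$ that becomes undominated; this is captured by $T_5$, while $T_7$ covers the alternative of swapping $\beta$ to $\bar P$ and simultaneously promoting $\gamma$ to $P$. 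In $C_3$ parity and $\bar P$-domination genuinely interact, and the four candidates $Z_1^+, T_4, T_6, T_8$ enumerate the minimum-cost responses (add a second $D$ at $\alpha$; swap the $D$ at $\beta$ to $P$; swap $\beta$ to the cheapest of $\{P', \bar P\}$ while promoting $\gamma$ to $P$; or convert every $\bar P$ and the $D$ into $P'$). In $C_6$ only $\bar P$-domination fails, and $T_1, T_2, T_3$ enumerate the three minimum-cost remedies (introduce one $P$ at $\gamma$; introduce two $D$'s at $\alpha, \alpha'$; or erase every $\bar P$ by promoting it to $P'$).

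The hard part will be case $C_3$, where I must argue collectively that the four listed candidates truly exhaust the minimum-cost feasible patterns. The argument is a standard local-exchange: starting from an arbitrary feasible $S$, I would repeatedly replace any non-greedy index by the choice dictated by $\alpha, \alpha', \beta$, or $\gamma$ without increasing weight, until the structure of $S$ coincides with one of $Z_1^+, T_4, T_6, T_8$; infeasibility of the remaining patterns (e.g., swapping $\beta$ to $P'$ or $\bar P$ without promoting some other index to $D$ or $P$) will be ruled out by invariant (ii). The same exchange template handles the other cases, completing the proof.
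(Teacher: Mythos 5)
Your proposal is correct and follows essentially the same route as the paper: the same decomposition into the parity constraint on the number of $D$-type selections and the domination constraint triggered by $\bar P$-type selections, and the same case-by-case identification of $Z_1, Z_1^+, Z_1^-, T_1,\dots,T_8$ as the exhaustive minimum-cost repairs (your two explicit invariants are just a cleaner articulation of what the paper's proof verifies implicitly). One small inaccuracy: $T_3$ and $T_8$ modify $Z_1$ at every index in $I$, which can be as large as $k-1$, so they are not representable by $O(1)$ overrides with $O(1)$ weight evaluation; this does not affect the overall $O(k)$ bound, since each of the constantly many candidates can still be built and weighed in $O(k)$ time.
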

\begin{proof}
By using a similar method of the arguments in Lemma~\ref{lemma:finding-D(H, u_1)}, one can show that all the paired-dominating sets $Z_1$, $Z_1^+$, $Z_1^-$, $T_1, \dots, T_8$ can be constructed in $O(k)$ time. Hence, the procedure certainly runs in $O(k)$ time. Further, one can verify that all the possible combinations of  conditions $D_1, D_2, \ldots, D_5$ have been considered in cases $C_1, C_2, \ldots, C_8$. Hence, to prove the correctness of the procedure, it suffices to show that each step of the procedure is correct.

First, we consider cases $C_1$ and $C_2$. In both of these cases, there exists an index $\ell$ such that $S_\ell^* = P(G_\ell,u_\ell)$. Therefore, $Z_1, Z_1^+$, and $Z_1^-$ are dominating sets of $H$. It follows that, if $r$ is even, then $Z_1$ is a minimum-weight dominating set of $H$ such that  $u_1 \not \in Z_1$ and $H[Z_1]$ has a perfect matching due to the selections of $S_i^*$ for $2 \le i \le k$. So, we have $S = Z_1$ for case $C_2$. On the other hand, if $r$ is odd, then in order to satisfy the condition that $H[Q_1]$ contains a perfect matching with minimum cost, we can either replace $S_\alpha^*$ with $D(G_\alpha,u_\alpha)$, or replace $S_\beta^*$ with $F(\setof{P(G_\beta, u_\beta), P'(G_\beta, u_\beta), \bar P(G_\beta, u_\beta)})$. This implies that we have $S = F(\setof{Z_1^+, Z_1^-})$ for case $C_1$. 

Next, we consider cases $C_3, C_4$, and $C_5$. Notice that in all three cases, there exists no index $\ell$ such that $S_\ell^* = P(G_\ell,u_\ell)$ and $r$ is an odd number. Moreover, for any paired-dominating set $Q_1$ of $H$, we have either $S_i^* = P'(G_i,u_i)$ for $2 \le i \le k$ or $V(B) \cap V(Q_1) \not = \emptyset$, where $B = H[\setof{u_1,u_2,\ldots,u_k}]$. In case $C_3$, a paired-dominating set $T_8$ is created for the former. Meanwhile, paired-dominating sets $Z_1^+$, $T_4$, and $T_6$ are built for the latter. As a consequence of $r = 1$, in order to ensure $H[Q_1]$ contains a perfect matching when $V(B) \cap V(Q_1) \not = \emptyset$, we replace $S_\alpha^*$ with $D(G_\alpha,u_\alpha)$ in $Z_1^+$, replace $S_\beta^*$ with $P(G_\beta,u_\beta)$ in $T_4$, and replace $S_\gamma^*$ and $S_\beta^*$ with $P(G_\gamma,u_\gamma)$ and $F(\{P'(G_\beta,u_\beta), \bar P(G_\beta,u_\beta)\})$ in $T_6$, respectively. Under the premise of minimizing weight, one can verify that $Z_1^+$, $T_4$, and $T_6$ are exactly the three potential candidates for $Q_1$. In case $C_4$, we have $S_\beta^* = D(G_\beta,u_\beta)$ and $S_i^* = P'(G_i,u_i)$ for $2 \le i \le k$ and $i \not = \beta$. Using a similar method of the above arguments, one can show that $S = F(\setof{Z_1^+, T_5, T_7})$ is true for case $C_4$. In case $C_5$, we have $r \ge 3$. Therefore, for the same reasons as case $C_1$, we have $S = F(\setof{Z_1^+, Z_1^-})$ for case $C_5$. 

Finally, we consider cases $C_6, C_7$, and $C_8$. Notice that in all three cases, there exists no index $\ell$ such that $S_\ell^* = P(G_\ell,u_\ell)$ and $r$ is an even number. In case $C_6$, we have either $S_i^* = P'(G_i,u_i)$ or $S_i^* = \bar{P}(G_i,u_i)$, where $1 \le i \le k$. To ensure $H[Q_1]$ contains a perfect matching, we replace $S_\gamma^*$ with $P(G_\gamma,u_\gamma)$ in $T_1$, replace $S_\alpha^*$ and $S_{\alpha'}^*$ with $D(G_\alpha,u_\alpha)$ and $D(G_{\alpha'},u_{\alpha'})$ in $T_2$, and replace $S_i^*$ with $P'(G_i,u_i))$ for all $i \in I$ in $T_3$, respectively. Under the premise of minimizing the weight $w(S)$, one can verify that $T_1$, $T_2$, and $T_3$ are exactly the three potential candidates for $Q_1$. Notice that, in case $C_7$, $S_i^* = P'(G_i,u_i)$ for $2 \le i \le k$. Further, $r \ge 2$ is an even number in case $C_8$. Thus, in both of these cases, we have $S = Z_1$ for the same reasons as case $C_2$.
\end{proof}
\vspace{3pt}
\subsubsection{Finding $Q_2$}\label{subsubsection:finding-Q_2}

In the following, we present a procedure to find the paired-dominating set $Q_2$. Similar to Procedure~\ref{algorithm:finding-Q_1}, the procedure solves the problem by considering six cases $C_9, C_{10}, \ldots, C_{14}$. For $9 \le i \le 14$, the case $C_i = (c_1, c_2,c_3,c_4)$ is an ordered 4-tuple. Further, the value of $c_k$ has the same definition as before for $1 \le k \le 4$. Then, we define $C_9 = (1,1,*,*)$, $C_{10} = (1,0,*,*)$, $C_{11} = (0,1,1,*)$, $C_{12} = (0,1,0,*)$, $C_{13} = (0,0,*,1)$, and $C_{14} = (0,0,*,0)$. Again, one can verify that all the possible combinations of the four conditions have been considered in cases $C_9, C_{10}, \ldots, C_{14}$. The paired-dominating sets $Z_2$, $Z_2^+$, $Z_2^-$, $T_9, \dots, T_{12}$ of $H$ are defined below, which are the potential candidates for $Q_2$.
\vspace{8pt}\\
%\newpage 3.4pt
\mbox{} \hspace{55pt} $Z_2 \hspace{3.4pt}= \bar{P}(G_1,u_1)\cup S_2^* \cup S_3^* \cup \ldots \cup S_k^*$. \\
\mbox{} \hspace{55pt} $Z_2^+ \hspace{0.2
pt}= (Z_1-S_\alpha^*) \cup D(G_\alpha,u_\alpha).$\\
\mbox{} \hspace{55pt} \hspace{0.1pt}$Z_2^- = (Z_1-S_\beta^*)\cup F(\{P(G_\beta, u_\beta),P'(G_\beta, u_\beta),\bar P(G_\beta, u_\beta)\})$.\\
\mbox{} \hspace{55pt} $T_9 \hspace{4.5pt} = (Z_2-S_\gamma^*) \cup P(G_\gamma, u_\gamma)$.\\
\mbox{} \hspace{55pt} $T_{10} \hspace{0.2pt} = (Z_2-S_\alpha^*-S_{\alpha'}^*) \cup D(G_\alpha,u_\alpha)\cup D(G_{\alpha'}, u_{\alpha'})$.\\
\mbox{} \hspace{55pt} $T_{11} \hspace{0.2pt} = (Z_2-S_\beta^*)\cup P(G_\beta,u_\beta)$.\\
\mbox{} \hspace{55pt} $T_{12} \hspace{0.2pt} =  (Z_2-S_\gamma^*-S_\beta^*)\cup P(G_\gamma,u_\gamma)\cup F(\{P'(G_\beta,u_\beta), \bar P(G_\beta,u_\beta)\})$.\\
\vspace{-16pt}

\noindent Moreover, the relations between the cases $C_9, C_{10}, \ldots, C_{14}$ and the paired-dominating sets $Z_2$, $Z_2^+$, $Z_2^-$, $T_9, \dots, T_{12}$ are detailed in Procedure~\ref{algorithm:finding-Q_2}.
\vspace{12pt}
\floatname{algorithm}{Procedure}
\begin{algorithm}
\caption{Finding $Q_2$}
\begin{algorithmic} [1]
\label{algorithm:finding-Q_2}
\medskip
\baselineskip 19pt
\REQUIRE A weighted block graph $H$ and a block $B$ of $H$ with 
$V(B) = \setof{u_1, u_2,\ldots, u_k}$. 
\\ \hspace{18pt} Dominating sets $D(G_i, u_i)$, $P(G_i, u_i)$, $P'(G_i, u_i)$, and $\bar{P}(G_i, u_i)$ for $1 \le i \le k$. 

\ENSURE A minimum-weight dominating set $S$ of $H$ such that $u_1 \not \in S$, $H[S]$ has a perfect \\ \hspace{33pt}matching, and $\bar{P}(G_1,u_1)\subseteq S$.

\STATE determine the paired-dominating sets $Z_2$, $Z_2^+$, $Z_2^-$, $T_9, \dots, T_{12}$;

\STATE {\bf if} $C_9$ or $C_{12}$ holds, {\bf then} let $S \leftarrow F(\setof{Z_2^+, Z_2^-})$;

\STATE {\bf if} $C_{10}$ or $C_{14}$ holds, {\bf then} let $S \leftarrow Z_2$;

\STATE {\bf if} $C_{11}$ holds, {\bf then} let $S \leftarrow F(\setof{Z_2^+, T_{11}, T_{12}})$;

\STATE {\bf if} $C_{13}$ holds, {\bf then} let $S \leftarrow F(\setof{T_9, T_{10}})$;

\RETURN $S$.

\end{algorithmic}
\end{algorithm}
\baselineskip \skippt

\vspace{-10pt}
\begin{lemma}
\label{lemma:finding-Q_2}
Given the dominating sets $D(G_i, u_i)$, $P(G_i, u_i)$, $P'(G_i, u_i)$, and $\bar{P}(G_i, u_i)$ for $1 \le i \le k$, Procedure~\ref{algorithm:finding-Q_2} outputs a minimum-weight dominating set $S$ of $H$ such that $u_1 \not \in S$, $H[S]$ has a perfect matching, and $\bar{P}(G_1,u_1)\subseteq S$. Moreover, the procedure can be completed in $O(k)$ time.
\end{lemma}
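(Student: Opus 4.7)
The plan is to proceed in close parallel with the proof of Lemma~\ref{lemma:finding-Q_1}, since Procedure~\ref{algorithm:finding-Q_2} is structured analogously to Procedure~\ref{algorithm:finding-Q_1}. For the $O(k)$ running time I would observe that $\alpha$, $\alpha'$, $\beta$, $\gamma$, $r$, and the indicator of whether any $S_i^* = P(G_i,u_i)$ can all be computed in $O(k)$ time using the same data-structure setup introduced in the proof of Lemma~\ref{lemma:finding-D(H, u_1)}; each of the seven candidate sets $Z_2$, $Z_2^+$, $Z_2^-$, $T_9,\dots,T_{12}$ is a constant-size alteration of the base union $\bar{P}(G_1,u_1)\cup S_2^*\cup\dots\cup S_k^*$, so its representation is produced in $O(k)$ time, and the dispatch in Steps~2--5 is then $O(1)$.

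For correctness, the first task is a routine check that the cases $C_9,\dots,C_{14}$ exhaust every admissible combination of the conditions $D_1,D_2,D_3,D_4$. The structural observation driving the analysis, and the main point of departure from Lemma~\ref{lemma:finding-Q_1}, is that by definition $u_1$ is \emph{not} dominated by $\bar{P}(G_1,u_1)$; hence any feasible $S$ with $\bar{P}(G_1,u_1)\subseteq S$ must satisfy $S\cap(V(B)\setminus\{u_1\})\ne\emptyset$, and since $B$ is a clique a single such $u_i$ suffices. The question therefore becomes how to introduce at least one $u_i\in V(B)\setminus\{u_1\}$ into $S$ while simultaneously ensuring that $H[S]$ carries a perfect matching. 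This is in contrast to Lemma~\ref{lemma:finding-Q_1}, where $P'(G_1,u_1)$ already dominates $u_1$ and the ``empty'' situation $r=0$ without a $P$-component was permitted to default to $Z_1$.

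Cases $C_9$, $C_{10}$, $C_{12}$, and $C_{14}$ are then handled in direct correspondence with cases $C_1$, $C_2$, $C_5$, and $C_8$ of Lemma~\ref{lemma:finding-Q_1}: either some $S_\ell^* = P(G_\ell,u_\ell)$ is already present, or $r\ge 2$ guarantees $V(B)\cap S\ne\emptyset$ via at least two $D$-components, so the domination of $u_1$ is automatic and the only outstanding issue is the parity of $r$, resolved by comparing $Z_2^+$ with $Z_2^-$ (or by keeping $Z_2$ itself when $r$ is already even). The optimality of $\alpha$ and $\beta$ established in the proof of Lemma~\ref{lemma:finding-D(H, u_1)} guarantees that these are the cheapest parity-correcting exchanges.

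The genuinely new work occurs in the ``empty'' cases $C_{11}$ and $C_{13}$, and this is where I expect the main obstacle to lie: one must show that the enumerated candidates truly exhaust every minimum-weight feasible set. In $C_{11}$ ($r=1$, no $P$-component), $\beta$ is the unique index with $S_\beta^* = D(G_\beta,u_\beta)$ and the induced matching within $B$ leaves $u_\beta$ unpaired; the three minimum-weight ways to rectify this while keeping some $u_i\in S$ are to add a second $D$-component ($Z_2^+$), to upgrade $\beta$ to a $P$-component ($T_{11}$), or to simultaneously demote $\beta$ via $F(\{P'(G_\beta,u_\beta),\bar P(G_\beta,u_\beta)\})$ and promote another index $\gamma$ to $P(G_\gamma,u_\gamma)$ ($T_{12}$). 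In $C_{13}$ ($r=0$, no $P$-component), no $u_i$ lies in $S$ initially, so one must either add a single $P$-promotion whose $u_\gamma$ is self-matched inside $G_\gamma$ ($T_9$) or add two $D$-promotions whose $u_\alpha,u_{\alpha'}$ pair up inside the clique $B$ ($T_{10}$). The exhaustiveness argument is a short local-exchange computation: any alternative, such as mixing a single $D$-promotion with a single $P$-promotion, either leaves a $u_i$ unmatched or is strictly dominated in weight by one of the listed candidates via the optimality of $\alpha$, $\alpha'$, $\beta$, and $\gamma$.
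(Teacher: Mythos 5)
Your proof is correct and takes essentially the same route as the paper's: the same case split $C_9$--$C_{14}$, the same candidate sets $Z_2, Z_2^+, Z_2^-, T_9,\dots,T_{12}$, and the same appeal to the optimality of $\alpha$, $\alpha'$, $\beta$, $\gamma$, with the only (welcome) difference that you make explicit the fact the paper leaves implicit --- that $\bar{P}(G_1,u_1)$ does not dominate $u_1$, forcing $S\cap(V(B)\setminus\{u_1\})\neq\emptyset$ and hence the leaner candidate lists in cases $C_{11}$ and $C_{13}$. (You also implicitly read the definitions of $Z_2^+$ and $Z_2^-$ as modifications of $Z_2$ rather than of $Z_1$ as literally printed, which is clearly the intended definition.)
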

\vspace{-5pt}
\begin{proof}
By using a similar method of the arguments in Lemma~\ref{lemma:finding-D(H, u_1)}, one can show that each step of the procedure can be completed in $O(k)$ time. Therefore, the procedure runs in $O(k)$ time. Further, one can verify that all the possible combinations of conditions $D_1, D_2, D_3$, and $D_4$ have been considered in cases $C_9, C_{10}, \ldots, C_{14}$. Hence, to prove the correctness of the procedure, it suffices to show that each step of the procedure is correct.

First, we consider cases $C_9$ and $C_{10}$. In both of these cases, there exists an index $\ell$ such that $S_\ell^* = P(G_\ell,u_\ell)$. Therefore, for the same reasons as cases $C_1$ and $C_2$ in Procedure~$4$, we have $S = F(\setof{Z_2^+, Z_2^-})$ for case $C_9$ and $S = Z_2$ for case $C_{10}$, respectively. Next, we consider cases $C_{11}$, and $C_{12}$. Notice that in both cases, there exists no index $\ell$ such that $S_\ell^* = P(G_\ell,u_\ell)$ and $r$ is an odd number. Since we have $r = 1$ in case $C_{11}$, in order to satisfy the condition that $H[Q_2]$ contains a perfect matching with minimum cost, we replace $S_\alpha^*$ with $D(G_\alpha,u_\alpha)$ in $Z_2^+$, replace $S_\beta^*$ with $P(G_\beta,u_\beta)$ in $T_{11}$, and replace $S_\gamma^*$ and $S_\beta^*$ with $P(G_\gamma,u_\gamma)$ and $F(\{P'(G_\beta,u_\beta), \bar P(G_\beta,u_\beta)\})$ in $T_{12}$, respectively. Under the premise of minimizing weight, one can verify that $Z_2^+$, $T_{11}$, and $T_{12}$ are exactly the three potential candidates for $Q_2$. In case $C_{12}$, we have $r \ge 3$. Therefore, for the same reasons as case $C_9$, we have $S = F(\setof{Z_2^+, Z_2^-})$ for case $C_{12}$.

Finally, we consider cases $C_{13}$, and $C_{14}$. Notice that in both cases, there exists no index $\ell$ such that $S_\ell^* = P(G_\ell,u_\ell)$ and $r$ is an even number. In case $C_{13}$, either $S_i^* = P'(G_i,u_i)$ or $S_i^* = \bar{P}(G_i,u_i)$ for $1 \le i \le k$. Therefore, to satisfy the condition that $H[Q_2]$ contains a perfect matching, we replace $S_\gamma^*$ with $P(G_\gamma,u_\gamma)$ in $T_9$, and replace $S_\alpha^*$ and $S_{\alpha'}^*$ with $D(G_\alpha,u_\alpha)$ and $D(G_{\alpha'},u_{\alpha'})$ in $T_{10}$, respectively. Again, under the premise of minimizing the weight, one can verify that $T_9$ and $T_{10}$ are exactly the two potential candidates for $Q_2$. Notice that, in case $C_{14}$, $r \ge 2$ is an even number. Thus, we have $S = Z_2$ for the same reasons as case $C_{2}$ in Procedure~$4$. 
\end{proof}

\vspace{15pt}
\noindent Combining Lemmas~\ref{lemma:finding-Q_1} and~\ref{lemma:finding-Q_2}, we obtain the following result. 

\begin{lemma}
\label{lemma:finding-P'(H, u_1)}
Given the dominating sets $D(G_i, u_i)$, $P(G_i, u_i)$, $P'(G_i, u_i)$, and $\bar{P}(G_i, u_i)$ for $1 \le i \le k$, a $\kappa_3$-paired-dominating set $P'(H, u_1)$ can be determined in $O(k)$ time.
\end{lemma}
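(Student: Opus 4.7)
The plan is essentially to take the minimum of the two candidates produced by the preceding procedures. I would begin by reiterating the observation made at the start of Section~\ref{subsection:finding-P'(H, u_1)}: since $u_1 \notin P'(H, u_1)$, the portion of $P'(H, u_1)$ lying in $G_1$ must itself be a dominating set of $G_1$ that avoids $u_1$ and whose induced subgraph contains a perfect matching. By the definitions of $P'(G_1, u_1)$ and $\bar{P}(G_1, u_1)$, the only minimum-weight candidates for this restriction are these two sets; hence one of them must be contained in $P'(H, u_1)$, which justifies the dichotomy $P'(G_1,u_1)\subseteq P'(H,u_1)$ or $\bar{P}(G_1,u_1)\subseteq P'(H,u_1)$ already noted in the exposition.

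Given this dichotomy, I would invoke Procedure~\ref{algorithm:finding-Q_1} to compute the optimal paired-dominating set $Q_1$ of $H$ satisfying $u_1 \notin Q_1$, $H[Q_1]$ has a perfect matching, and $P'(G_1, u_1) \subseteq Q_1$, and likewise invoke Procedure~\ref{algorithm:finding-Q_2} to compute the optimal paired-dominating set $Q_2$ of $H$ satisfying $u_1 \notin Q_2$, $H[Q_2]$ has a perfect matching, and $\bar{P}(G_1, u_1) \subseteq Q_2$. By Lemmas~\ref{lemma:finding-Q_1} and~\ref{lemma:finding-Q_2}, each call runs in $O(k)$ time and returns the minimum-weight feasible solution in its respective case.

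I would then set $P'(H, u_1) \leftarrow F(\{Q_1, Q_2\})$. Correctness follows directly from the dichotomy: because every candidate for $P'(H, u_1)$ falls into exactly one of the two cases handled by Procedures~\ref{algorithm:finding-Q_1} and~\ref{algorithm:finding-Q_2}, the overall minimum must coincide with the lighter of $Q_1$ and $Q_2$. The total running time is $O(k) + O(k) + O(1) = O(k)$, as required. The main technical burden, namely the extensive case analysis on the multiset of chosen $S_i^*$ and on the parity of $r$, has already been discharged in the proofs of Lemmas~\ref{lemma:finding-Q_1} and~\ref{lemma:finding-Q_2}; consequently the only obstacle for this lemma is confirming that the two subcases truly exhaust the possibilities for $P'(H,u_1)$, which is immediate from the definitions of $P'$ and $\bar{P}$.
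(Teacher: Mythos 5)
Your proposal is correct and follows essentially the same route as the paper: the paper likewise establishes the dichotomy $P'(G_1,u_1)\subseteq P'(H,u_1)$ or $\bar{P}(G_1,u_1)\subseteq P'(H,u_1)$ at the start of Section~\ref{subsection:finding-P'(H, u_1)} and then obtains the lemma by combining Lemmas~\ref{lemma:finding-Q_1} and~\ref{lemma:finding-Q_2}, selecting the lighter of $Q_1$ and $Q_2$. Your version just spells out the exhaustiveness of the two cases a bit more explicitly than the paper does.
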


\subsection{Determination of $\bar{P}(H, u_1)$}\label{subsection:finding-bar_P(H, u_1)}

Remember that $\bar{P}(H,u_1)$ is a minimum-weight dominating set of $H-u_1$ and $u_1$ is not dominated by $\bar{P}(H, u_1)$. Hence, by the definition of $\bar{P}(H,u_1)$, the only composition is $\bar{P}(H, u_1) =  \bar{P}(G_1,u_1)\cup P'(G_2,u_2)\cup \ldots \cup P'(G_k,u_k)$. This implies that, given the dominating sets $D(G_i, u_i)$, $P(G_i, u_i)$, $P'(G_i, u_i)$, and $\bar{P}(G_i, u_i)$ for $1 \le i \le k$, a $\kappa_4$-paired-dominating set $\bar{P}(H, u_1)$ can be determined in $O(k)$ time. Thus, we have the following result. 

\begin{lemma}
\label{lemma:finding-bar_P(H, u_2)}
Given the dominating sets $D(G_i, u_i)$, $P(G_i, u_i)$, $P'(G_i, u_i)$, and $\bar{P}(G_i, u_i)$ for $1 \le i \le k$, a $\kappa_4$-paired-dominating set  $\bar{P}(H, u_1)$ can be determined in $O(k)$ time.
\end{lemma}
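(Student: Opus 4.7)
The plan is to show that $\bar{P}(H, u_1)$ is forced to have the decomposition $\bar{P}(G_1,u_1)\cup P'(G_2,u_2)\cup \ldots \cup P'(G_k,u_k)$ announced in the excerpt, after which an $O(k)$-time assembly is immediate since each of the $k$ ingredients is already available.

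First I would eliminate all of $V(B)$ from consideration. By definition $u_1 \notin \bar{P}(H,u_1)$. For $i\ge 2$, the vertex $u_i$ is adjacent to $u_1$ in the clique $B$, so if $u_i \in \bar{P}(H,u_1)$ then $u_1$ would be dominated, contradicting the defining property of $\bar{P}(H,u_1)$. Hence $\bar{P}(H,u_1) \cap V(B) = \emptyset$, so $\bar{P}(H,u_1)$ is partitioned as $\bigcup_{i=1}^k (\bar{P}(H,u_1)\cap V(G_i))$ by Lemma~\ref{lemma:disjoint}.

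Next I would check that this partition is also an edge-partition inside $H[\bar{P}(H,u_1)]$. An edge between $V(G_i)-\{u_i\}$ and $V(G_j)-\{u_j\}$ for $i\neq j$ would lie entirely in $H - V(B)$, so its endpoints would share a connected component of $(H-B)\cup\{u_i\}$ and of $(H-B)\cup\{u_j\}$, forcing $V(G_i)\cap V(G_j)\neq\emptyset$ and contradicting Lemma~\ref{lemma:disjoint}. Together with $\bar{P}(H,u_1)\cap V(B)=\emptyset$, this means $H[\bar{P}(H,u_1)]$ is the vertex-disjoint union of the subgraphs $G_i[\bar{P}(H,u_1)\cap V(G_i)]$, so both the perfect-matching requirement and the weight cleanly decompose block-wise.

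With this decomposition in hand I would identify each restriction with one of the four pre-computed sets. For $i=1$, the set $\bar{P}(H,u_1)\cap V(G_1)$ must avoid $u_1$, must dominate every vertex of $V(G_1)-\{u_1\}$ (these have no dominators outside $G_1$), must leave $u_1$ undominated, and must induce a perfect matching; by definition this is a $\kappa_4$-paired-dominating set of $G_1$ with respect to $u_1$, so minimality forces the choice $\bar{P}(G_1,u_1)$. For $i\ge 2$, the restriction must dominate all of $V(G_i)$ (in particular $u_i$, whose only outside neighbours are the forbidden vertices of $B$), must not contain $u_i$, and must induce a perfect matching; this is a $\kappa_3$-paired-dominating set of $G_i$ with respect to $u_i$, so minimality forces $P'(G_i,u_i)$. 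Summing the $k$ weights and taking the union of pointers to the stored sets is $O(k)$, which completes the lemma.

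The only place I expect any subtlety is the edge-disjointness step between distinct $G_i$'s; the rest is direct bookkeeping from the four definitions of $D,P,P',\bar P$.
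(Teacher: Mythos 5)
Your proof is correct and follows the same route as the paper: the paper simply asserts that the only possible composition is $\bar{P}(G_1,u_1)\cup P'(G_2,u_2)\cup \ldots \cup P'(G_k,u_k)$ and reads off the $O(k)$ bound, while you supply the justification (no vertex of $B$ may be chosen lest $u_1$ be dominated, the $G_i$ pieces are vertex- and edge-disjoint by Lemma~\ref{lemma:disjoint}, and each restriction is forced to be the stored $\kappa_4$- resp.\ $\kappa_3$-set). Your version is a strictly more detailed write-up of the argument the paper leaves implicit.
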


%%%%%%%%%%%%%%%%%%%%%%%%%%%%%%%%%%%%%%%%%%%%%%%%%%%%%%%%%%%%%%%%
\section{Conclusion and Future Work}\label{section:conclusion}
%%%%%%%%%%%%%%%%%%%%%%%%%%%%%%%%%%%%%%%%%%%%%%%%%%%%%%%%%%%%%%%%
In this paper, we have presented an optimal algorithm for finding a  paired-dominating set of a weighted block graph $G$. The algorithm uses dynamic programming to iteratively determine $D(H, u)$, $P(H, u)$, $P'(H, u)$, and $\bar{P}(H, u)$ in a bottom-up manner, where $H$ is a subgraph of $G$ and $u \in V(H)$ is a cut vertex of $G$. When the graph is given in an adjacency list representation, our algorithm runs in $O(n+m)$ time. Moreover, the algorithm can be completed in $O(n)$ time if the block-cut-vertex structure of $G$ is given.

Below we present some open problems related to the paired-domination 
problem. It is known that distance-hereditary graphs is a proper superfamily of block graphs. Therefore, it is interesting to study the time complexity of paired-domination problem in distance-hereditary graphs. In~\cite{Chen09}, Chen~{\em et al.} proposed an approximation algorithm with ratio $\ln(2\Delta(G))+1$ for general graphs and showed that the problem is APX-complete, i.e., has no PTAS. Thus, it would be useful if we could develop an approximation algorithm for general graphs with constant ratio. Meanwhile, it would be desirable to show that the problem remains NP-complete in planar graphs and design an approximation algorithm.

\small
\bibliographystyle{abbrv}
\bibliography{PDom}
\end{document}